\documentclass[aoas]{imsart}

\RequirePackage{amsthm,amsmath,amsfonts,amssymb}
\RequirePackage[authoryear]{natbib}
\RequirePackage[colorlinks,citecolor=blue,urlcolor=blue]{hyperref}%% uncomment this for coloring bibliography citations and linked URLs
\theoremstyle{definition}
\newtheorem{definition}{Definition}
\startlocaldefs
\usepackage{amsthm}
\theoremstyle{plain}
\newtheorem{lemma}{Lemma}
\usepackage{booktabs,siunitx,rotating,adjustbox}
\usepackage[linesnumbered,ruled,vlined]{algorithm2e}
\usepackage{booktabs}
\usepackage{multirow}
\usepackage{array}

\endlocaldefs

\begin{document}

\begin{frontmatter}
%%%%%%%%%%%%%%%%%%%%%%%%%%%%%%%%%%%%%%%%%%%%%%
%%                                          %%
%% Enter the title of your article here     %%
%%                                          %%
%%%%%%%%%%%%%%%%%%%%%%%%%%%%%%%%%%%%%%%%%%%%%%
%\title{The Right to Remain Silent: Disentangling Absence from Non-Detection in Whale Monitoring via Deep ZIB Models}
\title{A Deep Zero-Inflated Model of North Atlantic right whale presence To support responsible Blue Economy Management in the U.S. East Coast}

%\title{A sample article title with some additional note\thanksref{T1}}
%\runtitle{???}
%\thankstext{T1}{A sample of additional note to the title.}

\begin{aug}
%%%%%%%%%%%%%%%%%%%%%%%%%%%%%%%%%%%%%%%%%%%%%%%
%% Only one address is permitted per author. %%
%% Only division, organization and e-mail is %%
%% included in the address.                  %%
%% Additional information can be included in %%
%% the Acknowledgments section if necessary. %%
%% ORCID can be inserted by command:         %%
%% \orcid{0000-0000-0000-0000}               %%
%%%%%%%%%%%%%%%%%%%%%%%%%%%%%%%%%%%%%%%%%%%%%%%
 \author[A]{\fnms{Jiaxiang}~\snm{Ji}\ead[label=e1]{jj983@scarletmail.rutgers.edu}},
\author[B]{\fnms{Laura}~\snm{Nazzaro}\ead[label=e2]{nazzaro@marine.rutgers.edu}},
 \author[A]{\fnms{Ahmed~Aziz}~\snm{Ezzat}\corref{cor1}\ead[label=e3]{aziz.ezzat@rutgers.edu}}

% \and
% \author[B]{\fnms{???}~\snm{???}\ead[label=e3]{???@???}}
%%%%%%%%%%%%%%%%%%%%%%%%%%%%%%%%%%%%%%%%%%%%%%
%% Addresses                                %%
%%%%%%%%%%%%%%%%%%%%%%%%%%%%%%%%%%%%%%%%%%%%%%
 \address[A]{%
Industrial \& Systems Engineering, Rutgers University\\
\printead{e1,e3}
}
\address[B]{%
Department of Marine \& Coastal Sciences, Rutgers University\\
\printead{e2}
}

% \address[B]{???\printead[presep={,\ }]{e2,e3}}
\end{aug}

\begin{abstract}
Effective modeling of endangered marine mammal species, such as the North Atlantic Right Whale, is critical for balancing marine conservation with the growing blue economy. Passive acoustic monitoring data collected by autonomous underwater vehicles provide new opportunities for localized marine species detection and oceanographic sensing, but introduce complex statistical challenges such as zero inflation, imperfect detection, and intricate dependence structures. %Existing species distribution models can address subsets of these challenges but remain constrained by rigid parametric assumptions and low-dimensional inputs. %that may not be commensurate with the complex habitat relationships %limiting their ability to leverage the multi-source covariate information generated by modern ocean observing systems. 
In response, %continuous observations but are subject to imperfect detection and complex dependencies on high-dimensional environmental covariates derived from satellite imagery. Existing statistical models explicitly account for detection bias but rely on rigid parametric assumptions, whereas modern machine learning approaches offer flexibility but ignore the underlying observation process, leading to biased ecological inference. To bridge this gap, 
we propose the Deep Zero-Inflated Bernoulli (DeepZIB) model\textemdash a deep statistical method which jointly models latent species presence and conditional detection probabilities while learning complex habitat relationships from heterogeneous covariate information. %derived from both satellite observations and underwater glider-based sensors. 
We establish theoretical results on the model's %parameter identifiability and its 
structural properties and conduct simulation experiments to demonstrate its ability to recover underlying parameters and latent presence fields. Application to real-world passive acoustic monitoring data on the North Atlantic Right Whale along the U.S. East Coast demonstrates improved model adequacy and predictive performance in capturing the species’ dynamic and spatially varying habitat. A key advantage of DeepZIB is its ability to generate high-resolution, spatially and temporally varying presence maps, providing valuable insights for targeted and risk-aware management of blue economy industries, ranging from offshore and marine energy, to fisheries management and maritime transport.

%Applications to real-world data on the North Atlantic Right Whale in the U.S. East coast shows improved model adequacy and predictive performance relative to mainstream statistical and deep learning methods. Unique to the proposed approach is its ability to produce high-resolution, spatially and temporally varying presence maps, which are highly relevant %We further extend the model with a Gaussian Random Field (GRF) to account for residual spatiotemporal dependence and establish the theoretical identifiability of the proposed framework. Simulation experiments using various data-generating processes demonstrate the benefits of the proposed method, yielding substantial improvements in predictive performance, with mean absolute error reduced by over 60\% compared to benchmarks across both linear and non-linear scenarios. Applied to passive acoustic detections and satellite-derived environmental covariates for NARWs, the model produces high-resolution 
%spatial presence probability maps that 
%to inform localized decision-making and risk mitigation in various blue economy applications, such as offshore energy, fisheries management, and maritime transport.
\end{abstract}

\begin{keyword}
\kwd{Blue Economy}
\kwd{Marine Mammals}
\kwd{Spatio-Temporal Data}
\kwd{Species Distribution Models}
\kwd{Zero-Inflated Models}
\end{keyword}

\end{frontmatter}
\section{Introduction}\label{sec:intro}
%The blue economy encompasses the sustainable use of ocean resources for economic growth while safeguarding marine ecosystems. It spans a wide range of activities, from fisheries to offshore renewable energy. Among these, offshore wind has rapidly emerged as a cornerstone of national and regional strategies to decarbonize energy systems and stimulate coastal economies. The United States has set an ambitious target to install 30 Gigawatts (GW) of offshore wind capacity by 2030, with the Mid-Atlantic region expected to be the first and largest contributor to this goal.Yet this same region is home to a diverse marine ecosystem comprising several important marine species, including the critically endangered North Atlantic Right Whale (NARW), formally known as Eubalaena glacialis. Given their dwindling populations, concerns about construction noise and increases vessel traffic highlight the urgent need for effective conservation measures to minimize disruption of NARW habitats. [background of blue economy and transit to the marine ecosystem]
The blue economy refers to the sustainable and economic use of ocean-based resources, %while protecting marine ecosystems through effective monitoring and management. It 
encompassing a wide range of industries, such as fisheries and aquaculture, offshore and marine energy, maritime transport and shipping. These industries contribute substantially to both global and national economies %nearly 5\% of global GDP %The global ocean economy 
%has been valued at more than \$24 trillion in assets, and  generates approximately \$2.5 trillion in annual goods and services, equivalent to nearly 5\% of global GDP 
\citep{GEF2018BlueEconomy}.  
%Recognizing this global significance, national-level assessments show that the blue economy is a major and rapidly growing component of the U.S. economy. 
%Ocean-related industries in the U.S. supported approximately 2.4 million jobs and generated more than \$397 billion in GDP in 2019 alone. This highlights 
%both the economic scale of ocean industries and their increasing relevance for sustainable development \citep{NOAA2021BlueEconomy}. 
%Yet, the rapid growth of the blue economy 
However, the development of the blue economy industries is inseparable from its surrounding marine ecosystem, since critical ecological habitats and industrial offshore activities often occupy the same space. Ensuring their co-existence through %effective monitoring and 
responsible, risk-aware management is therefore essential. %as the long-term viability of ocean industries depends directly on the health and resilience of the surrounding environment.
%While the blue economy offers substantial economic opportunities, it is essential to account for the risks it poses to endangered species whose habitats overlap with areas of industrial development. 
A timely exemplar of this co-existence is the North Atlantic Right Whale (NARW), % formally known as Eubalaena glacialis, 
a critically endangered species with fewer than 400 individuals alive \citep{nraw2021}. 
%At the current mortality rate, NARWs are projected to be functionally extinct by 2040 \citep{ifawextinct}. The threats are primarily linked to human activities. Specifically, acoustic disturbance is a critical concern during construction, while the operational maintenance requires increased vessel activity, thereby raising the risks of vessel strikes
%For example, noise is a major concern during construction, whereas increased vessel traffic for servicing the energy assets is another risk due to potential vessel strikes 
%\citep{noaadeath,noaastrikes} and habitat disruption \citep{kaldellis2016environmental,shadman2021environmental}. 
Increasing offshore human development activities pose several considerable risks to NARW habitats, including vessel strikes from busy maritime lanes \citep{noaadeath, noaastrikes}, entanglement risks from fishing activities \citep{knowlton2012monitoring}, and serious habitat disruption from congested vessel traffic \citep{laist2014effectiveness}.  
%These risks illustrate the urgency of developing effective strategies to track and mitigate risks for this critically endangered species.

Addressing these growing risks requires effective monitoring strategies of NARW habitats. Traditional approaches have relied on visual sighting data from aerial and vessel line-transect surveys \citep{davis2023upcalling,USNavy2023AFTT}. These visual surveys provide valuable information on whale distribution and behavior, but are inherently limited by weather, visibility, and species availability. Passive acoustic monitoring (PAM) has emerged as a powerful alternative approach, enabling the continuous monitoring of whale vocalizations under a wide range of environmental conditions \citep{fucile2006self}. More recently, the use of autonomous underwater vehicles (AUVs) has significantly expanded the capabilities of PAM. For example, the inset in Figure \ref{fig:data}(b) shows an underwater glider (a special type of AUV) used in this study for collecting localized data about NARW presence and habitat. When equipped with acoustic sensors, these gliders can navigate quietly through targeted areas for long periods of time to detect marine mammal presence and acquire highly granular habitat information about local oceanic conditions \citep{dreyfust2022aligning}. %These new data streams generated by glider-based PAM systems and oceanic information represent the type of high-resolution information that underpins the emerging New Blue Economy %This term broadly refers to the responsible development and management of ocean-based resources through the integration of multi-modal data collected from the ocean environment, including PAM observations, satellite-derived products, and other oceanographic measurements 
%\citep{NOAA_BlueEconomy}. 

A primary use of these data streams is to inform species distribution models (SDMs), which aim to characterize how the presence or distribution of a marine species, such as the NARW responds to spatial and temporal variations in environmental conditions \citep{miller2010species}. %SDMs play a central role in conservation planning and risk management for marine habitats \citep{miller2010species}. 
However, applying existing SDMs to these emerging data streams introduces two fundamental statistical challenges. The first is \textit{imperfect detection}. PAM systems detect marine mammals only when vocalizing, leading to substantial availability bias when whales are present but silent, consequently leading to inflated non-detections. The second challenge pertains to the \textit{complexity of the covariate space}. %environmental covariates. 
Integrating the highly localized PAM data with exogenous oceanographic covariates (such as satellite-derived measurements) yields a highly heterogeneous covariate space. %creates a complex and heterogeneous feature space. 
Moreover, the relationship between the presence of NARWs\textemdash a highly mobile marine species\textemdash and this heterogeneous covariate information is expected to be complex with intricate contextual, spatial, and temporal dependence. Accurately capturing these complexities calls for highly expressive models that can adapt to the variability of the ocean environment and the responses of NARWs to such variability.

Existing modeling approaches tend to address only one side of the problem. Specifically, many existing SDMs attempt to address the imperfect detection issue. For example, density surface models (DSMs) incorporate detection functions to account for detection biases \citep{Miller2013DSM}. %can adjust the detection probability based on distance sampling to account for perception bias, 
Occupancy models explicitly separate the latent occurrence state from the observation process \citep{mackenzie2017occupancy}, whereas zero-inflated models utilize mixture distributions to distinguish between structural absences and missed detections \citep{Lambert1992ZIP}. Despite their utility, these approaches typically rely on rigid parametric assumptions %or simple functional forms that are
that may not be commensurate with the complex and dynamic habitat of highly mobile and intelligent marine species such as the NARWs. %. This limits their capacity to capture the complex, nonlinear relationships inherent in marine ecosystem or to adapt to highly dynamic, non-stationary oceanographic processes. 
More recently, machine learning methods have been increasingly used as SDMs especially in the presence of heterogeneous, multi-modal covariate information \citep{ji2024machine}. %to capture  complex nonlinear relationships and can integrate multi-source and multi-modal data streams \citep{elith2008working}. 
Yet, these approaches still lack explicit mechanisms to account for imperfect detection biases and ecological sampling realities. Moreover, they are comparatively limited in interpretability and uncertainty quantification relative to their statistical counterparts\textemdash capabilities that are essential to inform conservation planning and responsible blue economy development. 
%often provide limited interpretability and uncertainty quantification. %Furthermore, mainstream deep learning models often fail to provide probabilistic outputs (uncertainty quantification) or interpretability, leaving a critical gap in their ability to produce reliable and transparent ecological predictions.

%As we argue in this paper, inferring whale presence from glider-based PAM data and satellite product presents a fundamental challenge. 
To fill this gap, we propose a Deep Zero-Inflated Bernoulli (DeepZIB) method for modeling NARW presence conditional on heterogeneous covariate information derived from glider- and satellite-based measurements. %oceanographic, spatial, temporal, and contextual covariates. 
DeepZIB is formulated as a probabilistic, spatio-temporal statistical model that explicitly accounts for zero inflation through a custom cross-entropy-based loss function.  %and ecological grounding of the model. %rigorously accounts for zero inflation in glider-based PAM detections. 
Embedded within DeepZIB, %hierarchal statistical model for zero-inflation with 
%a deep learning representation acts on the exogenous covariate information 
%of complex habitat covariates. The statistical ``layer'' governs the observation process and statistically distinguishes non-detections from true absences, thereby retaining the interpretability and ecological grounding of detection-based models. Embedded within the statistical model, 
a deep learning representation %operates as a powerful feature extractor by directly acting on 
acts on the heterogeneous, multi-modal covariate space %including glider- and satellite-based oceanographic covariates, 
to learn %low-dimensional, but  information-rich representations of the complex habitats governing NARW presence. %This representation is embedded within a likelihood-based statistical framework, enabling joint modeling of latent presence and detection processes. 
%In this way, the DeepZIB approach preserves the rigor, interpretation, and probabilistic nature of rigorous statistical approaches, while leveraging the representational capacity of deep learning to model complex habitat relationships.
latent presence fields and conditional detection probabilities. In this way, DeepZIB retains the ecological soundness and interpretability of statistical SDMs, while inheriting the highly expressive power of deep learning methods in capturing complex habitat relationships.  %while accommodating excess zeros arising from imperfect detection. 
We establish theoretical results on DeepZIB's structural properties and further show that it can be viewed as a generalization of a classical zero-inflated Bernoulli model. Application to modeling NARW presence along the U.S. East Coast\textemdash a geographical region with significant blue economy activities\textemdash demonstrate improved model adequacy and predictive performance relative to purely statistical or machine learning approaches. In the context of blue economy management, which is the primary motivation of this work, we highlight how DeepZIB's unique capability to produce high-resolution, spatially and
temporally varying presence maps can prove instrumental to support targeted,
risk-aware, and responsible management of various offshore industries, ranging
from marine and offshore energy, to fisheries management and maritime transport.

The remainder of this paper is organized as follows. Section \ref{sec:LR} reviews existing approaches for marine mammal habitat modeling. Section \ref{sec:data_description} describes the multi-source dataset, including passive acoustic monitoring data collected by autonomous gliders and environmental variables derived from satellite imagery. Section \ref{sec:method} details the formulation of the proposed DeepZIB framework, followed by an examination of its theoretical properties and parameter estimation. Section \ref{sec:results} presents the numerical results, comprising a set of simulation experiments, along with a case study on modeling NARW presence in the U.S. East Coast. Finally, Section \ref{sec:discussion} summarizes our findings and outlines directions for future research.

\section{Literature Review}\label{sec:LR}

Statistical approaches have long provided the foundation for marine mammal SDMs by explicitly linking species observations to environmental covariates. %Broadly, they can be grouped into regression-based models, density surface models, occupancy models, and zero-inflated models. 
Regression-based models, including logistic regression and generalized additive models (GAMs), are widely used in species distribution modeling. These methods relate detections or sightings to environmental and oceanographic covariates through predefined link functions \citep{elith2009species,best2012online,monsarrat2016spatially}. %While such models are interpretable and have proven effective in many applications, they typically rely on additive and rigid parametric assumptions. As a result, they are often limited in their ability to represent dynamic processes and to capture complex, nonlinear dependencies and interactions across space and time.
%hile these models are interpretable and effective for low-dimensional predictors, their reliance on simple smooth structures limits their ability to fully exploit high-dimensional covariate information about marine habitat.
DSMs extend regression-based frameworks by explicitly combining spatial modeling approaches with distance sampling \citep{Buckland2001DistanceSampling,Buckland2004AdvancedDistanceSampling}. In the classical two-stage formulation (correction for imperfect detection, then spatial modeling), a detection function is first estimated from distance measurements, and then a GAM is fitted to segment-level counts to obtain a spatially varying density surface \citep{HedleyBuckland2004,Miller2013DSM,roberts2016habitat}. %This two-stage procedure %is the most widely used class of DSMs and 
%can be viewed as a regression-based spatial model with detectability correction. To avoid separating detection and spatial modeling, 
Alternatively, one-stage DSMs have been proposed in which parameters of the detection and spatial models are estimated simultaneously \citep{Royle2004AbundanceDistance,RoyleDorazio2008Hierarchical,Pardo2015ADTDSM}. Some models adopt point process formulations, modeling detections as realizations of an underlying spatial intensity thinned by the detection function \citep{Johnson2010ModelBasedDS,yuan2017point}. %Importantly, the limitations of DSMs differ from those of regression-based models. While regression-based models are primarily constrained by their rigid parametric assumptions, DSMs face additional challenges arising from their reliance on distance sampling. In particular, 
Understandably, DSMs require precise distance measurements to estimate detection functions, which are unavailable %or imprecise 
in most glider-based platforms.
%These approaches perform well when species-environment relationships are low-dimensional. However, as ecological datasets become increasingly complex and high-dimensional and distance data are reliable, DSMs face specific challenges in the context of glider-based monitoring. Uniquely, they rely heavily on accurate distance measurements to estimate the detection function, which are often unavailable or imprecise for single-sensor autonomous platforms.

Occupancy models extend regression-based approaches by explicitly separating the ecological state from the observation process. In a standard occupancy model, each site is associated with a latent Bernoulli variable denoting true presence, whereas repeated surveys at that site yield detection and non-detection observations which are used to identify the detection process \citep{MacKenzie2002,mackenzie2017occupancy}. These models can be structurally equivalent to Hidden Markov Models (HMMs) when the unobservable species occurrence is assumed to evolve as a latent state over time \citep{MacKenzie2003Occupancy,Marescot2020Poaching}. From a statistical perspective, colonization and extinction in an occupancy model can correspond to transition probabilities of the hidden state in an HMM, while detection probability defines the observation model %\textemdash a formulation that is well established in the HMM literature 
\citep{Scott2002BayesianHMM,Yau2011BNPHMM,Holsclaw2017NonhomogeneousMarkov}. This framework allows non-detections to be possibly attributed to imperfect detection rather than simply regarding it as true ecological absence. Occupancy models have an attractive mechanism to account for detection biases, but rely on a site-based, repeated-survey design that does not align with glider-based PAM surveys, where platforms roam continuously over a survey area, rather than revisiting fixed sites. %Importantly, the primary objective of occupancy models is to estimate site-level occupancy probabilities, whereas our objective is to model presence and detection at fine spatio-temporal resolutions.

Zero-inflated models were developed to address the frequent occurrence of excess zeros in ecological datasets. In marine mammal monitoring, zeros can arise either because animals are truly absent or because they are present but not detected (i.e., imperfect detection). Zero-inflated formulations explicitly accommodate this by introducing a latent component that governs whether an observation arises from a structural-zero state, together with a conditional model that governs detections or counts when this state is inactive. Depending on the type of response variable, zero-inflated models have been developed in several forms, including zero-inflated Poisson and negative binomial models for count data \citep{Lambert1992ZIP,Hall2000}, as well as zero-inflated Bernoulli formulations for binary outcomes \citep{diop2011maximum,fulton2015mixed, Diop2016ZIB, lee2021validation, li2022semiparametricZIB}. %Diop2016ZIB,  which have been applied to a wide range of applications, from biomedical data \citep{fulton2015mixed}, to transportation surveys \citep{lee2021validation}, and ecological applications \citep{li2022semiparametricZIB}. 
In contrast to standard regression models, zero-inflated models allow the zero-generating mechanism to depend on covariates and to differ from the process driving nonzero observations. Unlike occupancy models, they do not require a site-based, repeated-survey design, making them applicable to glider-based PAM data. %As such, zero-inflated models provide a flexible statistical framework for representing the mixture of ecological absences and observational failures in marine mammal data.

%Alternative formulations conceptualize sightings as point events, with inhomogeneous Poisson and spatio-temporal point process models enabling inensity-based inference in continuous space and time. More recently, spatiotemporal generalized lienar mixed models (GLMMs) have been developed to capture residual spatial autocorrelation and dynamic shifts by embedding random fields within regression structures (\cite{anderson2022sdmtmb}). 

%Although these statistical approaches form the methodological backbone of marine mammal species distribution modeling and offer clear ecological interpretability, they share important limitations. They rely heavily on predefined parametric or smooth functional forms, making it difficult to capture nonlinear interactions and complex ecological responses. Moreover, most classical frameworks were developed for visual surveys and did not naturally accommodate the nonlinear, high-dimensional, and multi-source covariates collected by the glider. These limitations have motivated growing interest in machine learning methods, which offer flexible, data-driven representations capable of capturing complex nonlinear structure in high-dimensional environmental predictors.

Despite their utility, standard zero-inflated models are often constrained by rigid parametric assumptions or simplistic functional forms that may inadequately represent the complex and dynamic habitats of highly mobile marine species such as the NARW. %Taken together, these statistical frameworks form the methodological backbone of marine mammal species distribution modeling and provide clear ecological interpretability. However, they primarily rely on rigid parametric assumptions, which can limit their ability to fully exploit the high-dimensional, multi-modal marine ecosystem data. 
This has motivated growing interest in machine learning (ML) approaches, %that reduce reliance on rigid parametric forms and provide a flexible framework for capturing complex nonlinear relationships and interactions.
%Machine learning (ML) methods have become increasingly prominent in species distribution modeling as they can exploit high-dimensional covariate information and capture complex nonlinear interactions. Tree-based methods such as boosted regression trees, have demonstrated strong predictive performance in species distribution modeling across both terrestrial and marine ecosystems \citep{elith2008working,ji2024machine}. Kernel-based approaches such as support vector machines, have demonstrated robust classification performance in ecological settings \citep{drake2006modelling}. More recently, deep learning has been applied to ecological problems, leveraging neural network architectures to identify spatiotemporal patterns and integrate heterogeneous data sources, including satellite imagery, and acoustic recordings \citep{rubbens2023machine,norouzzadeh2018automatically}. These studies demonstrate the potential of ML to extend habitat modeling beyond the constraints of traditional statistical methods.
%Machine learning (ML) methods have become increasingly prominent in species distribution modeling as they can 
due to their ability to exploit high-dimensional and heterogeneous covariate information for modeling complex habitat relationships. Tree- and kernel-based methods, such as boosted regression trees and support vector machines, have demonstrated strong predictive performance in species distribution modeling across both terrestrial and marine ecosystems \citep{drake2006modelling,elith2008working,ji2024machine}. %Kernel-based approaches have also demonstrated robust classification performance in ecological settings \citep{drake2006modelling}. 
More recently, deep learning methods have emerged as powerful SDMs in various ecological applications %been applied to SDMs to identify spatiotemporal patterns and better address the complexity of ecological niches 
\citep{Chen2016DeepME,botella2018deep,deneu2021convolutional}.
%More recently, deep learning has been applied to ecological problems, leveraging neural network architectures to identify spatiotemporal patterns and integrate heterogeneous data sources, including satellite imagery, and acoustic recordings \citep{rubbens2023machine,norouzzadeh2018automatically}.
%Despite their demonstrated predictive performance, most ML models are treated as predictive black boxes and fall short in terms of interpretability. In response, recent lines of research have explored hybrid formulations in which the deep learning structures are embedded within statistical models, a paradigm referred to as deep statistical learning. Most existing work in this area has been developed in general settings, including probabilistic forecasting, deep state-space models, and latent-variable frameworks \citep{damianou2013deep,krishnan2015deep,rangapuram2018deep,salinas2020deepar}. Collectively, these methods illustrate the feasibility of embedding deep learning architectures inside statistical models. However, such hybrid approaches remain largely unexplored in ecological habitat modeling, and to the best of our knowledge, they have not been used to address imperfect detection in acoustic surveys.
Nevertheless, ML methods, in their standard form, do not have explicit mechanisms to account for ecological sampling realities. %data-generating mechanisms underlying ecological observations. 
Specifically, these approaches typically assume perfect detection and do not distinguish between true absence and non-detection. As a result, while these methods offer substantial flexibility and representation capacity, %in leveraging multi-source covariate information for representing highly complex habitat relationships, 
they may not be well aligned with the observational processes that characterize ecological monitoring data, especially those collected using glider-based PAM systems.

The above limitations of statistical and ML-based SDMs motivate us to define a hybrid method, based on a deep zero-inflated approach, which combines the strengths of both paradigms. Our approach therefore belongs to the emerging class of \textit{statistical deep learning methods} \citep{krishnan2015deep,rangapuram2018deep,ye2025deepmide}, in which elements or characteristics of a deep learning architecture are embedded within a statistical approach to perform certain learning tasks. Along these lines, our primary modeling contribution is to introduce a deep zero-inflated probabilistic model for species distribution modeling that is explicitly designed to account for imperfect detection and complex habitat relationships. %by combining the rigor of zero-inflated statistical models with the flexibility of deep learning approaches. 
We establish key theoretical results regarding the model’s structural properties and subsequently apply it to model NARW presence along the U.S. East Coast, a region where expanding blue economy activities would benefit substantially from accurate, high-resolution mapping of critical marine species distributions.%including model identifiability under realistic assumptions on the covariate structure, and loss function properties. % further extend the framework by embedding a spatio-temporal random field to capture residual dependence. The application of our proposed framework on modeling the presence of NARWs in the U.S. East Coast\textemdash a geographical region with significant offshore activities and busy maritime routes\textemdash demonstrates the merit of the proposed approach and its timely relevance to responsible decision-making and risk-aware management in the growing Blue Economy. 
%Building on these considerations, we propose the Deep ZIB model, to the best of our knowledge, represents the first deep statistical model developed for species distribution modeling under imperfect detection with direct application to NARWs. We summarize our main contributions as follows. First, we shift the modeling objective from counts or abundance to directly estimating the probability of presence and detection, thereby facilitating decision-making. Second, our framework accommodates complex, nonlinear, and high-dimensional covariates by leveraging neural networks, enabling the integration of diverse data sources. In particular, the proposed approach allows the joint incorporation of glider-based acoustic detections and satellite-derived environmental variables within a unified modeling framework. Third, we propose a deep statistical approach that combines the interpretability of zero-inflated structures with the flexibility of deep learning, while embedding spatiotemporal dependence through the gaussian random fields.
\section{Data Description}\label{sec:data_description}

This work focuses on the U.S. Mid-Atlantic region, an area that supports a diverse marine ecosystem, including the critically endangered NARW, as well as increased Blue economy activities. %such as offshore wind development, vessel activities and fisheries operations. 
%To support real-time species monitoring and assess environmental conditions, we make use of 
Two disparate sources of data are used from this region: (\textit{i}) autonomous underwater glider observations, providing both acoustic-based detections and localized oceanographic sensing; and (\textit{ii}) satellite-based oceanographic observations. Table \ref{tab:covariates} provides an overview of the covariates used in this study with their data sources. %The inputs combine glider-based measurements and satellite-derived products, which differ in type, spatial, and temporal resolution. 
Details of each data source are described below. 
\subsection{Glider Datasets}
The glider dataset, $\boldsymbol{\mathcal{Z}}$, used in this study was collected from nine Generation 3 Teledyne Slocum glider missions conducted between August 2020 and June 2022, operating off the south coast of New Jersey between $38.5^\circ N$ to $39.5^\circ N$ and $74.5^\circ W$ to $73.5^\circ W$. Each glider was equipped with a digital acoustic monitoring (DMON) system developed by Woods Hole Oceanographic Institution (WHOI), enabling near real-time passive detection of marine mammals. The system runs a low-frequency classification algorithm onboard and transmits detected events back to shore for expert validation \citep{baumgartner2011generalized, baumgartner2013real}. As the sound source cannot be localized directly, detections are georeferenced to the glider's location at the time the sound was recorded.
Across all deployments, a total of $104$ NARW detections were identified. In addition to acoustic recordings, the glider simultaneously collected profiles of oceanographic variables at a vertical resolution of $0.25$ meters (m), including water temperature, oxygen concentration, salinity, and glider depth. Figure \ref{fig:data}(a) shows an example of the oceanographic data collected by a glider during a single deployment. As the glider repeatedly dove and climbed through the water column, it recorded vertical profiles of the oceanographic variables. 

\begin{figure}[]
\centering
\includegraphics[trim=1.5cm 1.5cm 0cm 2.5cm,clip,width = 1\linewidth]{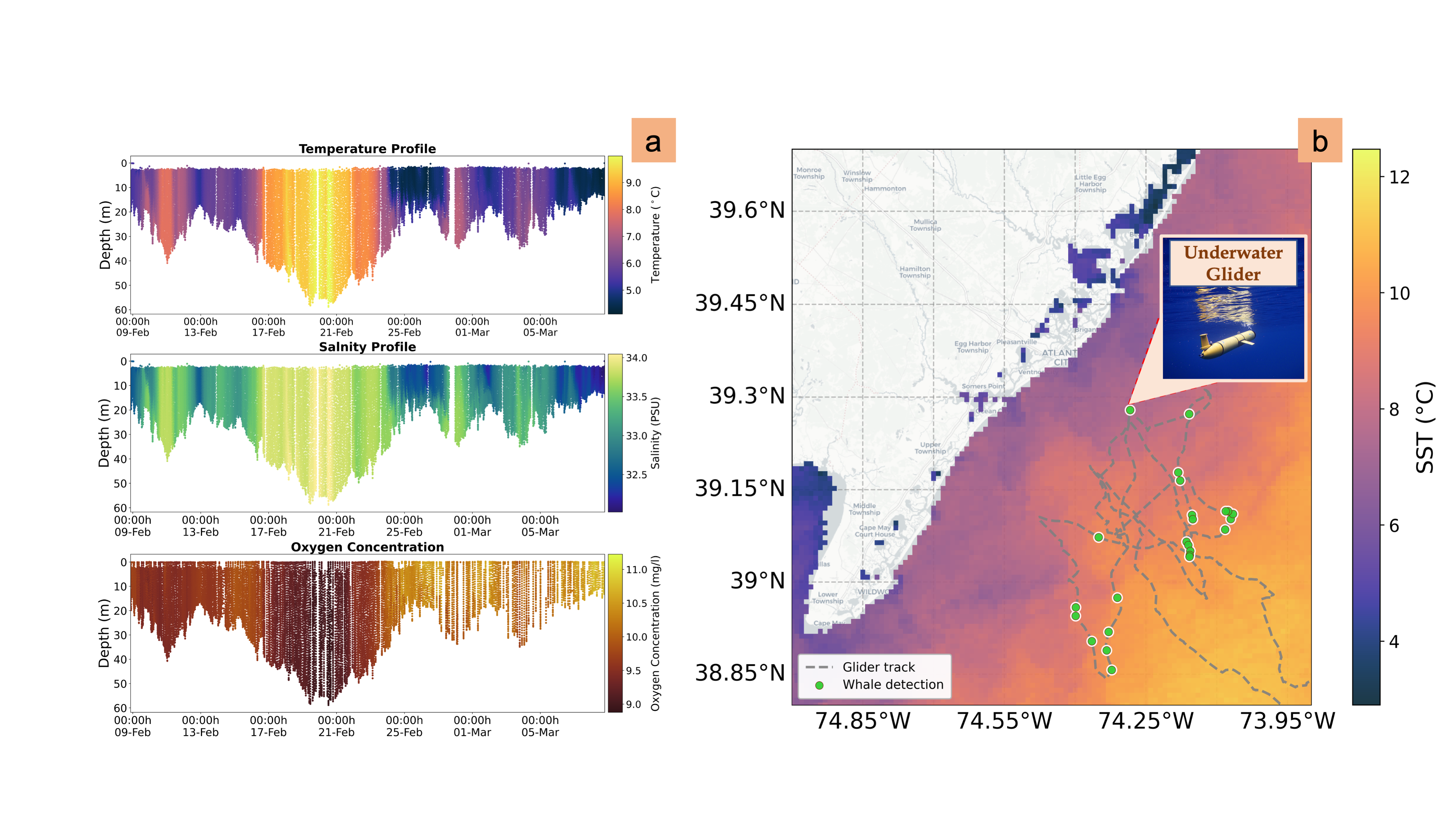}
\caption{Sample visualization of satellite and glider data. Panel (a) shows glider‐measured environmental variables collected between 8 February and 10 March 2021. Panel (b) displays sea surface temperature (SST) derived from VIIRS Suomi NPP 1-Day 750 m Composite Northwest Atlantic on 27 December 2020, with the corresponding glider track from that mission (grey lines) and a total of 24 detections (green dots) overlaid.}
\label{fig:data}
\end{figure}
\iffalse
\begin{table}[ht]
\centering
\caption{Glider mission information and number of detections per mission.}
\label{tab:trips}
\begin{tabular}{|c|c|}
\hline
\textbf{Trips}                    & \textbf{Number of Detections} \\ \hline
2020-07-29 to 2020-08-26  & 0                   \\ \hline
2020-10-03 to 2020-11-05 & 0                   \\ \hline
2020-11-19 to 2020-12-21 & 47                  \\ \hline
2021-02-08 to 2021-03-08 & 21                  \\ \hline
2021-11-20 to 2021-12-17 & 3                   \\ \hline
2022-01-13 to 2022-02-11 & 17                  \\ \hline
2022-02-15 to 2022-03-16 & 16                  \\ \hline
2022-03-30 to 2022-04-22 & 0                   \\ \hline
2022-05-20 to 2022-06-10 & 0                   \\ \hline
\textbf{Total} & \textbf{104}                   \\ \hline
\end{tabular}
\end{table}
\fi

\subsection{Satellite Datasets}
Satellite observations complement the glider dataset by providing broad spatial coverage of surface ocean conditions. The satellite dataset $\boldsymbol{\mathcal{X}}$ contains information about key oceanic variables such as sea surface temperature (SST), chlorophyll concentration, as well as the frontal value, which reflects the magnitude of differences between
adjacent water masses in SST and ocean color variable space \citep{oliver2008objective}. One challenge with satellite data is the presence of clouds, which often limits spatial coverage. To mitigate this, we integrate multiple satellite sources to improve overall data availability. Specifically, we use three satellite sources: (\textit{i}) “NOAA/NESDIS/STAR GHRSST GOES16 SST Daily
Composite SST" (GOES); (\textit{ii}) "VIIRS Suomi NPP 1-Day 750 m Composite Northwest Atlantic" (VIIRS); and (\textit{iii}) "MODIS Aqua 3-Day 1 km Composite Northwest Atlantic" (MODIS). Since GOES data are likely to be less affected by clouds due to its geostationary nature, it is always our first choice. When GOES data are unavailable for a given location and time, VIIRS is used next as it offers higher spatial resolution compared to MODIS. MODIS is used as a fallback for SST and chlorophyll when neither of the other two sources is available. Since frontal value is only available from MODIS, we solely rely on MODIS for it. Figure \ref{fig:data}(b) presents an example of sea surface temperature map obtained from VIIRS, overlaid with the glider track and detections from a co-located mission. While the satellite image corresponds to a specific day, the glider trajectory spans multiple days. %of the survey and is shown here to illustrate the spatial coverage. 
The inset in Figure \ref{fig:data}(b) shows the glider used in our surveys for illustration purposes. 
\begin{table}[t]
\centering
\caption{Summary of covariates used in this study, collected from underwater gliders and satellite-based products.}
\label{tab:covariates}
\begin{tabular}{@{}l>{\raggedright\arraybackslash}p{4.5cm}>{\raggedright\arraybackslash}p{7cm}@{}}
\toprule
\textbf{Data Source} & \textbf{Covariate} & \textbf{Description} \\
\midrule
\multirow{4}{*}{Glider}
& Temperature ($^\circ$C) 
&Water temperature \\

& Salinity (psu) 
& Total concentration of dissolved salts in seawater \\

& Oxygen concentration ($\mu$mol/L) 
& Amount of dissolved oxygen in the water \\

& Depth (m) 
& Vertical distance from sea surface to glider’s position\\
\midrule
\multirow{3}{*}{Satellite}
& Sea surface temperature ($^\circ$C) 
& Temperature of the ocean's surface layer \\

& Chlorophyll (mg/m$^3$) 
& Concentration of the phytoplankton pigment \\

& Frontal value 
& Gradient strengths across water mass \\

\bottomrule
\end{tabular}
\end{table}

%Statistical modeling of such data therefore requires accommodating heterogeneous covariates while explicitly accounting for imperfect detection.
\section{Methodology}
\label{sec:method}
%\subsection{Statistical setup}
We consider the statistical problem of inferring species presence from imperfect detection data collected over space and time, given exogenous multi-source covariate information. %Observations consist of historical glider-based detection records, together with satellite-based covariates. %The primary objective is to estimate the probability that a species is present at a given location and time, conditional on multi-source covariate information, while accounting for imperfect detection. 
Conceptually, the goal is to learn the presence probability as in (\ref{eq:hlevel}).
\begin{equation}
   \Pr\!\big(\text{presence at location } \mathbf{s} \text{ and time } t 
\mid \text{exogenous satellite and glider covariates $\mathcal{X}$ and $\mathcal{Z}$}\big).
\label{eq:hlevel}
\end{equation}

%We first review the ZIB model formulation and its hierarchical structure in Section \ref{zib}. This is followed by Section \ref{Deep ZIB}, where we integrate the deep neural network into the ZIB formulation.
%While the methodology is broadly applicable, we define the spatial and temporal resolution of the model as dictated by our application context. Temporally, the latent presence process is modeled at a monthly resolution. Spatially, the model operates on a continuous domain parameterized by longitude and latitude, without imposing a fixed spatial resolution or predefined spatial grid.
%\textcolor{red}{[@Jiaxiang, here is a good place to define the spatial and temporal resolution of the model.]}
We begin by formulating a Zero-Inflated Bernoulli (ZIB) model %to explicitly distinguishing true absence from non-detection within passive acoustic monitoring data 
for PAM data in Section \ref{zib}. Then, in Section \ref{Deep ZIB}, we introduce the proposed DeepZIB model, which is a generalization of the ZIB model that internalizes a deep neural network (DNN) for modeling complex covariate effects. %We then present a variant of ZIB that is equipped with a random field for modeling residual spatio-temporal dependencies. 
This is followed by Section \ref{theory} where we establish important theoretical properties of the proposed DeepZIB approach. %such as parameter identifiability and the equivalence of the proposed loss to the negative log-likelihood. 
Finally, Section \ref{estimation} describes the estimation and prediction framework.
\subsection{The Zero-Inflated Bernoulli model}
\label{zib}

%Let $Y(\mathbf{s},t) \in \{0,1\}$ denote the binary variable that indicates whether a whale is detected at location $\mathbf s \in \mathbb{R}^2$ and time $t$. Specifically, $Y(\mathbf{s},t) = 1$ if a detection occurs, and $Y(\mathbf{s},t)=0$ otherwise. Let $S(\mathbf{s},t) \in \{0,1\}$ be a latent binary variable representing the unobservable presence states of the whale ($S(\mathbf{s},t)=1$ if the whale is present, and $S(\mathbf{s},t)=0$ otherwise). 
Let $Y(\mathbf{s},t) \in \{0,1\}$ denote the binary detection outcome at location 
$\mathbf{s} \in \mathbb{R}^2$ and time $t$, defined as
\begin{equation}
Y(\mathbf{s},t) :=
\begin{cases}
1, & \text{if a marine mammal is detected at  location $\mathbf{s}$ and time $t$},\\
0, & \text{otherwise}.
\end{cases}
\end{equation}
Similarly, let $S(\mathbf{s},t) \in \{0,1\}$ be a latent binary variable representing the 
(unobservable) presence state of the marine mammal, defined as
\begin{equation}
S(\mathbf{s},t) :=
\begin{cases}
1, & \text{if a marine mammal is present at location } \mathbf{s} \text{ and time } t,\\
0, & \text{otherwise}.
\end{cases}
\end{equation}
These two processes are not equivalent %since absence of detection does not automatically equate whale absence, 
due to imperfect detections. Let $\mathbf{x}(\mathbf{s},t) \in \mathbb{R}^p$ denote the $p$-dimensional covariate vector associated with the presence process (i.e., covariates affecting marine mammal habitat preferences). Similarly, let $\mathbf{z}(\mathbf{s},t) \in \mathbb{R}^q$ denote the $q$-dimensional covariate vector associated with the detection process (i.e., covariates affecting the likelihood of detection, conditional on presence). Note that the covariate vectors $\mathbf{x}(\mathbf{s},t)$ and $\mathbf{z}(\mathbf{s},t)$ can share some covariates. %To ensure model identifiablity, we assume that there exists at least one continuous variable that appears in $\mathbf{x}(\mathbf{s},t)$ but not in $\mathbf{z}(\mathbf{s},t)$ or vice versa. This assumption prevents the effects of whale presence and detection probability from being completely confounded, allowing the latent presence process and the detection process to be distinguished.

In whale-acoustic surveys, the observation data contain a large proportion of zeros. These zeros arise from two different mechanisms. %Structural zeros occur when no whale is present at the sampling location $\mathbf{s}$ at time $t$, in which case detection is impossible. Non-structural zeros, by contrast, arise when a whale is present but remains undetected due to silent behavior or sensor limitations. 
Structural zeros occur when $S(\mathbf{s},t)=0$, in which case detection is impossible. Non-structural zeros arise when $S(\mathbf{s},t)=1$ but $Y(\mathbf{s},t)=0$, in which case the whale is present but remains undetected due to silent behavior or sensor limitations. %The ZIB model explicitly separates these two mechanisms by introducing the latent presence state $S$. 
This formulation decomposes the observation process into a presence component and a detection component, thereby providing a principled explanation for the excess zeros observed in the data.

The function $\pi(\mathbf{x}(\mathbf{s},t);\boldsymbol{\Theta})$ defined in (\ref{eq:general_notation1}) denotes the probability that a whale is present at location $\mathbf{s}$ and time $t$, given the covariates $\mathbf{x}(\mathbf{s},t)$. 
%The presence and detection probability functions for the zero-inflated model are defined as (\ref{eq:general_notation1},\ref{eq:general_notation2})
\begin{equation}
\label{eq:general_notation1}
    \pi(\mathbf{x}(\mathbf{s},t);\boldsymbol\Theta) = P(S(\mathbf{s},t)=1|\mathbf{x}(\mathbf{
s},t)).
\end{equation}
Conditional on whale presence, the function $\varphi(\mathbf{z}(\mathbf{s},t);\boldsymbol{\Psi})$ defined in (\ref{eq:general_notation2}) represents the probability of detecting a whale at location $\mathbf{s}$ and time t given the covariates $\mathbf{z}(\mathbf{s},t)$. 
\begin{equation}
\label{eq:general_notation2}
\varphi(\mathbf{z}(\mathbf{s},t);\boldsymbol\Psi) = P(Y(\mathbf{s},t)=1|S(\mathbf{s},t)=1,\mathbf{z}(\mathbf{s},t))
\end{equation}
where $\boldsymbol \Theta$ and $\boldsymbol \Psi$ denote the parameter sets associated with the presence and detection components, respectively. The exact composition of $\boldsymbol \Theta$ and $\boldsymbol \Psi$ depends on the specific model considered. %For notation convenience, we write $\pi(\mathbf{s},t)$ and $\varphi(\mathbf{s},t)$ as shorthands for $\pi(\mathbf{x}(\mathbf{s},t))$ and $\varphi(\mathbf{z}(\mathbf{s},t))$.

For the classical ZIB model, these parameter sets reduce to vectors of regression coefficients, $\boldsymbol \Theta = \boldsymbol{\alpha}, \boldsymbol{\Psi} = \boldsymbol{\beta}$, associated with the presence and detection components, respectively. Both components are typically specified through generalized linear models with a logit link, as expressed in (\ref{eq:logit1}) and (\ref{eq:logit2}), respectively.
\begin{equation}
\label{eq:logit1}
\pi_{zib}(\mathbf{x}(\mathbf{s},t);\boldsymbol{\Theta}) = \frac{1}{1 + \exp(-\mathbf{x}(\mathbf{s},t)^\top\boldsymbol{\alpha})},
\end{equation}
\begin{equation}
\label{eq:logit2}
\varphi_{zib}(\mathbf{z}(\mathbf{s},t);\boldsymbol{\Psi}) = \frac{1}{1 + \exp(-\mathbf{z}(\mathbf{s},t)^\top\boldsymbol{\beta})}.
\end{equation}
Furthermore, both the presence and the detection processes are assumed to follow the Bernoulli distribution, as expressed in (\ref{eq:B1}) and (\ref{eq:B2}), respectively.
\begin{equation}
S(\mathbf{s},t) \sim \mathrm{Bernoulli} (\pi(\mathbf{x}(\mathbf{s},t);\boldsymbol{\Theta}))
\label{eq:B1}
\end{equation}
\begin{equation}
Y(\mathbf{s},t)\mid S(\mathbf{s},t) \sim \mathrm{Bernoulli}(\varphi(\mathbf{z}(\mathbf{s},t);\boldsymbol{\Psi})S(\mathbf{s},t)).
\label{eq:B2}
\end{equation}
Hence, the ZIB model can be written as follows:
\begin{align}
    \Pr\bigl(Y(\mathbf{s},t)\! =\! y\!\mid\!  \mathbf{x}(\mathbf{s},t),\mathbf{z}(\mathbf{s},t)\bigr)\!
  =&\! \bigl[1\!-\!\pi(\mathbf{x}(\mathbf{s},t);\boldsymbol{\Theta})\bigr]\mathbf{1}_{\{y=0\}}
   \! + \nonumber\\&\! \pi(\mathbf{x}(\mathbf{s},t);\boldsymbol{\Theta})\varphi(\mathbf{z}(\mathbf{s},t);\boldsymbol{\Psi})^{y}\bigl[1\!-\!\varphi(\mathbf{z}(\mathbf{s},t);\boldsymbol{\Psi})\bigr]^{1-y},   
\end{align}
where observations are assumed to arise from a two-stage process. With a probability of $1-\pi(\mathbf{x}(\mathbf{s},t);\boldsymbol{\Theta})$, the species is absent and the detection is deterministically zero. Meanwhile, with a probability of $\pi(\mathbf{x}(\mathbf{s},t);\boldsymbol{\Theta})$, the species is present and the detection follows a Bernoulli distribution with a detection probability of $\varphi(\mathbf{z}(\mathbf{s},t);\boldsymbol{\Psi})$.

%\textcolor{red}{[Jiaxiang, please add a sentence or two to explain equation 10 here.]}

For $n$ glider samples $(y_i, \mathbf{x}_i, \mathbf{z}_i)_{i=1}^n$, where $y_i = Y(\mathbf{s}_i,t_i), \mathbf{x}_i = \mathbf{x}(\mathbf{s}_i,t_i)$, and $\mathbf{z}_i = \mathbf{z}(\mathbf{s}_i,t_i)$, and by letting $\pi_i=\pi(\mathbf{x}(\mathbf{s}_i,t_i);\boldsymbol{\Theta})$ and $\varphi_i=\varphi(\mathbf{z}(\mathbf{s}_i,t_i);\boldsymbol{\Psi})$, we can express the negative log-likelihood function for the classical ZIB model as follows:
\begin{equation}
\mathcal{L} = -\sum_{i=1}^n \bigl[y_i log(\pi_i \varphi_i) + (1-y_i) log(1-\pi_i+\pi_i(1-\varphi_i)) \bigr].
\end{equation}

The model parameters can be directly estimated by optimizing $\mathcal{L}$. From there, the ZIB model can be used to obtain predictions of presence at any location $\mathbf{s}$ and time $t$ for which covariate information, $\mathbf{x}(\mathbf{s},t)$, is available. %Specifically, satellite-derived products and glider-based measurements enter the model exclusively through the covariate vectors $\mathbf{x}(\mathbf{s},t)$ and $\mathbf{z}(\mathbf{s},t)$.
Importantly, prediction does not require a new glider deployment and can be generated at locations and times where satellite products are available, enabling broad spatial and temporal coverage beyond the glider sampling tracks. %Despite these advantages, the classical ZIB formulation remains limited in several aspects. Both the presence and detection components are specified using rigid parametric regression models, which restrict their ability to exploit the high-dimensional, multi-modal marine ecosystem data. Moreover, static or weakly dynamic covariate representations limit predictive performance under rapidly evolving environmental conditions. 

\subsection{A proposed deep zero-inflated Bernoulli model}
\label{Deep ZIB}
%In the classical ZIB model, both the presence and detection probabilities are modeled through logit–linear predictors of the corresponding covariates. While interpretable, these linear specifications can be restrictive in applications where covariate effects are nonlinear or interact in complex ways. 
In the classical ZIB model, both the presence and detection functions are specified using parametric regression models. %with static or weakly dynamic %which restrict their ability to exploit the high-dimensional, multi-modal marine ecosystem data. Moreover, static or weakly dynamic covariate representations.%, limiting predictive performance under rapidly evolving environmental conditions. 
%The limitations mentioned above motivate the development of a deep ZIB framework, in which flexible representation learning is used to model the presence and detection process.
Instead, the proposed DeepZIB approach models these functions 
through internalized DNNs. The high-level workflow of DeepZIB is shown in Figure \ref{fig:frame}, and is explained below. 
\begin{figure}[t]
\centering
\includegraphics[trim=2cm 0cm 2cm 0cm,clip,width = 1\linewidth]{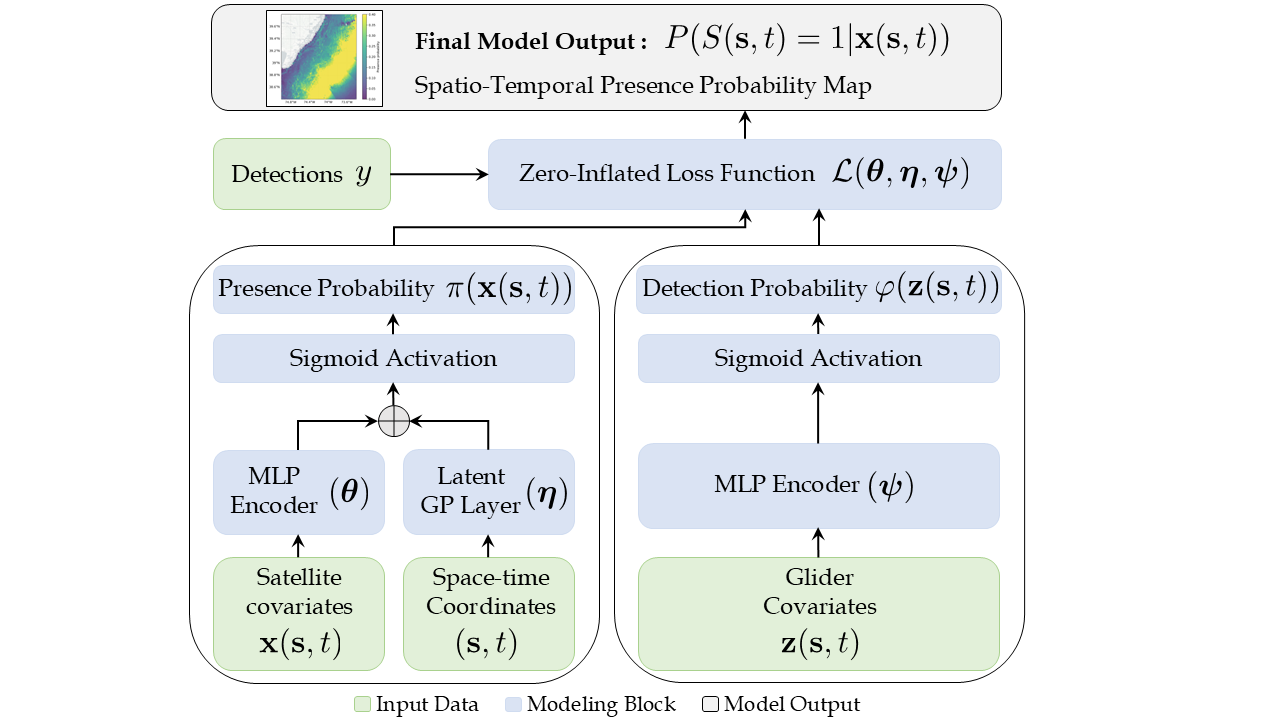}
\caption{Illustration of the Deep ZIB framework. The architecture consists of two parallel streams modeling the latent ecological and observation processes. The presence branch (left) integrates satellite-derived environmental covariates $\mathbf x(\mathbf{s},t)$ via a Multilayer Perceptron (MLP) and spatiotemporal coordinates $(\mathbf s,t)$ via a Gaussian Random Field (GRF) to estimate the latent presence probability $\pi(\mathbf{x}(\mathbf{s},t))$. Simultaneously, the detection branch (right) maps glider-based covariates $\mathbf z(\mathbf{s},t)$ through a separate MLP to the conditional detection probability $\varphi(\mathbf{z}(\mathbf{s},t))$. These probabilities are combined with the observed detection labels $y$ to optimize the zero-inflated loss function, enabling joint optimization of network weights and GRF parameters. The fitted parameters are then applied to new inputs to generate predictions of whale presence probability.
}
\label{fig:frame}
\end{figure}

Let $\boldsymbol{\theta}$ and $\boldsymbol{\psi}$ denote the sets of network weights for the presence and detection functions, respectively. In other words, we set $\boldsymbol{\Theta} = \boldsymbol{\theta}$ and $\boldsymbol{\Psi} = \boldsymbol{\psi}$. The corresponding presence and detection functions in DeepZIB are denoted by $\pi_{deep}(\mathbf{x}(\mathbf{s},t);\boldsymbol{\Theta})$ and $\varphi_{deep}(\mathbf{z}(\mathbf{s},t);\boldsymbol{\Psi})$, and can be expressed as in (\ref{eq:Deep ZIB_prob1}) and (\ref{eq:Deep ZIB_prob2}), respectively. 
\begin{equation}
\label{eq:Deep ZIB_prob1}   \pi_{deep}(\mathbf{x}(\mathbf{s},t);\boldsymbol{\Theta}) = \frac{1}{1+\exp (-f_{\boldsymbol{\theta}}(\mathbf{x}(\mathbf{s},t)))},
\end{equation}
\begin{equation}
\label{eq:Deep ZIB_prob2} 
\varphi_{deep}(\mathbf{z}(\mathbf{s},t);\boldsymbol{\Psi}) = \frac{1}{1+\exp (-h_{\boldsymbol{\psi}} (\mathbf{z}(\mathbf{s},t)))},
\end{equation}
where $f_{\boldsymbol{\theta}}$ and $h_{\boldsymbol{\psi}}$ are distinct DNN models.  
%The Deep ZIB model is a generalization of the ZIB model. Specifically, Deep ZIB with a shallow multi-layer perceptron (no hidden layers) and a sigmoid activation function is indeed a classical ZIB model. This equivalence will be formally established in Section \ref{theory}. 
The choice of the network architecture for $f_{\boldsymbol{\theta}}$ and $h_{\boldsymbol{\psi}}$ is highly flexible, so long that $f_{\boldsymbol{\theta}}$ and $h_{\boldsymbol{\psi}}$ remain differentiable with respect to their corresponding parameter sets, $\theta$ and $\psi$, respectively. 
%For the DNN components of the proposed model, 
Here, we adopted a multilayer perceptron (MLP) to model each of $f_{\boldsymbol{\theta}}$ and $h_{\boldsymbol{\psi}}$. Specifically, the input covariates are first normalized, then ReLU activations are used throughout the hidden layers, followed by a sigmoid transformation at the output layer to ensure valid probability values. The number of layers and neurons for each network can be different, and is discussed in Section \ref{estimation}. 

We then propose a custom cross-entropy loss function that generalizes that of the classical ZIB model:
\begin{align}
\label{eq:loss}
%\small
  \mathcal{L}^{(deep)} = & \sum_{i=1}^{n} \underbrace{-\,y_i \,\log(\pi_i)
  \;-\;
  (1 - y_i)\,
  \Bigl[
      w_i \,\log(\pi_i)
      + (1 - w_i)\,\log\bigl(1-\pi_i\bigr)
  \Bigr]}_{\textit{Non-Structural Zeros}} \\
  \nonumber &-\underbrace{\,y_i \,\log(\varphi_i)
  \;-\;
  (1 - y_i)\,w_i \,\log\bigl(1-\varphi_i\bigr).}_{\textit{Detections within the Non-Structural Subset}}
% \mathcal{L} = \sum_{i=1}^n (l_{1,i} + l_{2,i})
\end{align}
\iffalse
\begin{align}
\mathcal{L}_{1}
&=
\sum_{i=1}^{n}
\big[
- y_i \log(\pi_i)
- (1-y_i)( w_i \log(\pi_i) + (1-w_i)\log(1-\pi_i))
\big], \\ \nonumber
\mathcal{L}_{2}
&=
\sum_{i=1}^{n}
\big[
- y_i\log(\varphi_i)
- (1-y_i) w_i\log(1-\varphi_i)
\big]. \\ \nonumber
\mathcal{L} &= \mathcal{L}_1 + \mathcal{L}_2
\end{align}
\fi
The loss in \eqref{eq:loss} comprises two components that correspond to distinct parts of the data-generating process. 
Importantly, we explicitly leverage the fact that a detection constitutes a confirmed presence. That is, for observations with $y_i = 1$, the latent state is known to be non-structural ($s_i=1$), whereas uncertainty about the latent state arises only for non-detections ($y_i=0$). Hence, the first component in (\ref{eq:loss}) evaluates the prediction of non-structural zeros. For observations with $y_i=0$, the latent indicator distinguishing structural from non-structural zeros is unobserved. We therefore introduce $w_i = \Pr (\mathbf{s}_i=1|y_i =0)$ as the posterior probability that a non-detection arises from the non-structural state. This quantity acts as a soft label that allows each zero to contribute probabilistically to the first component of the loss. The second component in (\ref{eq:loss}) evaluates detection within the non-structural subset. Conditional on being non-structural, detections are governed by the detection probability $\varphi_i$. This loss penalizes mismatches between observed detections and predicted detection probabilities, with $w_i$ to reflect uncertainty about whether detection was possible. 

%\textcolor{red}{[@Jiaxiang, do we here leverage the fact that a detection is a confirmed presence? If yes, we need to explicitly mention it.]}

%Here, the first part evaluates the prediction of non-structural zeros, whereas the second part evaluates the detections within the non-structural subset. $w_i = P(s_i = 1|y_i = 0)$ denotes the posterior probability that an observation with $y_i=0$ is a non-structural zero. $w_i$ serves as a soft label in the first term (since $s_i$ is unobservable) and as a weight in the second term. 

%At each training iteration, we sequentially optimize the loss function, first run a forward pass with the current network parameters to obtain $\pi_i,\varphi_i$, and then compute the posterior weights $w_i$. After that, we evaluate the two cross-entropy terms with fixed weights and back-propagate to update the network parameters that generate $\pi_i, \varphi_i$.  

%\subsection{Deep ZIB-ST model}
%\label{Deep ZIBst}

Beyond the covariate effects in the presence function captured through $f_{\boldsymbol{\theta}}$, spatial and temporal dependence can be modeled by augmenting the presence predictor in the DeepZIB model with a spatio-temporal Gaussian random field (GRF). In this case, the parameter set can be augmented to include both the network weights $\boldsymbol{\theta}$ and the GRF hyperparameters $\boldsymbol{\eta}$, that is, we have $\boldsymbol{\Theta}=(\boldsymbol{\theta},\boldsymbol{\eta})$. For an observation at location $\mathbf{s}$ and time $t$, the presence probability, now denoted as $\pi_{deepst}(\mathbf{x}(\mathbf{s},t);\boldsymbol{\Theta}),$ can be expressed as in (\ref{eq:deepst}). 

\begin{equation}
\pi_{deepst}(\mathbf{x}(\mathbf{s},t);\boldsymbol{\Theta}) = \frac{1}{1+\exp(-(f_{\boldsymbol{\theta}}(\mathbf{x}(\mathbf{s},t))+g_{\boldsymbol{\eta}}(\mathbf{s},t)))},
\label{eq:deepst}
\end{equation}
where $g_{\boldsymbol{\eta}}(\mathbf{s},t)$ is a zero-mean GRF with a stationary covariance function denoted by $K(\mathbf{w},u)$, such that $\mathbf{w}$ and $u$ are the spatial and temporal lags, respectively. 
%\begin{equation}
%    g_{\boldsymbol{\eta}}(\mathbf{s},t) \sim \mathcal{GP}(0, K(\mathbf{w},u)), 
%\end{equation}
%Here $K(\mathbf{w},u)$ is a stationary kernel parameterized by $\boldsymbol{\eta}$, where $\mathbf{w},u$ are the spatial and temporal lags. The most prevalent approach to specify $K(\mathbf{w},u)$ in the literature is through the separable approach, which decomposes the dependence structure over space and time such that 
Here, $K(\mathbf{w},u)$ can be modeled by decomposing the dependence structure over space and time, as in (\ref{eq:kernel}). 
\begin{equation}
\label{eq:kernel}
    K(\mathbf{w},u) = \kappa \bigg(  K^s(\mathbf{w})\times K^t(u)\bigg),
\end{equation}
such that $\kappa >0$ is the marginal variance, whereas $K^s(\mathbf{w})$ is a squared exponential kernel as in (\ref{eq:kernel1}), and $K^t(u)$ is a periodic kernel, as expressed in (\ref{eq:kernel2}), encoding seasonal periodicity with a period = 12 to reflect a monthly resolution. 
%For the spatial component, we use the squared exponential kernel (\ref{eq:kernel1}), while the temporal component is modeled using a periodic kernel defined in (\ref{eq:kernel2}):
\begin{equation}
\label{eq:kernel1}
    K^s(\mathbf{w}) = \exp\bigg(-\frac{\|\mathbf{w}\|^2}{2r_w^2}\bigg),
\end{equation}
\begin{equation}
\label{eq:kernel2}
K^t(u) = \exp\bigg(-\frac{2\sin^2(\pi |u|/12)}{r_u^2}\bigg),
\end{equation}
where $r_w$ and $r_u$ are spatial and temporal length-scale parameters, respectively.

Meanwhile, the detection model retains its own parameter vector $\boldsymbol{\Psi}$, as defined in Section \ref{Deep ZIB}. This choice reflects the different roles of the presence and detection models in DeepZIB. Specifically, the presence process represents the underlying ecological habitat relationships. %occurrence, 
%for which smooth spatial and temporal variation is expected, and in fact, desirable. 
In contrast, the detection process is conditional on presence and primarily driven by local survey conditions. Hence, %Since this component does not aim to represent large-scale ecological structure, and introducing a second latent random field would increase model complexity, 
we do not include an additional GRF term in the detection model. 

\subsection{Theoretical properties}
\label{theory} 
Here, we establish important theoretical properties of the proposed framework. %followed by key properties of the loss function proposed in (\ref{eq:loss}) and the model structure. 

\subsubsection{Parameter Identifiability in DeepZIB} %First, we formalize the identifiability problem, and then we show that minimizing the proposed loss function is equivalent to minimizing the negative log-likelihood. Finally, we prove that the Deep ZIB model reduces to the classical ZIB model when the neural networks degenerate to a single linear layer, thereby justifying the Deep ZIB as a strict generalization of the classical formulation.
A model is said to be identifiable when different parameter values generate different distributions for the observable data. %In other words, if two parameter sets produce exactly the same distribution of the observed data, then those parameter sets must encode the same model. 
For the DeepZIB model, this means that, whenever two parameter sets %$(\boldsymbol{\Theta},\boldsymbol{\Psi})$ and $(\boldsymbol{\Theta}',\boldsymbol{\Psi}')$ 
yield the same distribution for the observable detection process across all covariate values, then they must also yield the same underlying presence and detection functions. This is formally defined in Definition \ref{def:ident_zib}, where $\mathcal{X}$ and $\mathcal{Z}$ denote the feature spaces of the presence and detection covariate vectors $\mathbf{x}(\mathbf{s},t)$ and $\mathbf{z}(\mathbf{s},t)$, respectively. For the below theoretical properties, we omit the space-time coordinates $\mathbf{s}$ and $t$ for notational simplicity. We also omit the superscripts $deep$ or $deepst$ since our focus is mainly on DeepZIB and DeepZIB-ST models.  
\begin{definition}\label{def:ident_zib}
The DeepZIB model is said to be identifiable if, for any two parameter sets
$\boldsymbol \vartheta = (\boldsymbol \Theta, \boldsymbol \Psi)$ and
$\boldsymbol \vartheta^* = (\boldsymbol \Theta^*, \boldsymbol \Psi^*)$,
$\Pr(Y=y \mid \mathbf{x}, \mathbf{z}; \boldsymbol \vartheta)
=
\Pr(Y=y \mid \mathbf{x}, \mathbf{z}; \boldsymbol \vartheta^*)$
$
\text{for all } (\mathbf{x},\mathbf{z}) \in \mathcal X \times \mathcal Z
$
implies that
$
\pi(\mathbf{x};\boldsymbol \Theta) = \pi(\mathbf{x};\boldsymbol \Theta^*)$
$ \text{for all } \mathbf{x}\in\mathcal X,
\varphi(\mathbf{z};\boldsymbol \Psi) = \varphi(\mathbf{z};\boldsymbol \Psi^*)
$
$ \text{for all } \mathbf{z}\in\mathcal Z.
$
\end{definition}

To establish identifiability of the DeepZIB model, we require a mild structural assumption that the set of detection covariates $\mathcal Z$ contains at least one continuous variable that is not included in the presence covariate set $\mathcal X$. This assumption is well aligned with the distinct roles of the presence and detection components, as covariates governing detection are typically different from those driving ecological presence.
Under this assumption, Lemma \ref{lm:id} establishes identifiability by exploiting the factorization of the conditional probability of observation into a presence component and a detection component.
%To show the identifiability of the Deep ZIB model, we only make an assumption that $\mathcal{Z}$ contains at least one continuous variable that is not included in $\mathcal{X}$. This is a practically non-problematic assumption given the different roles of the detection and presence functions. Lemma \ref{lm:id} establishes identifiability by using the fact that the observed detection probability can be factorized into two components, and assumes at least one continuous covariate appears in only one of the two components.

%From there, it is clear that if $\pi(\mathbf{x};\boldsymbol{\Theta}) = \pi(\mathbf{x};\boldsymbol{\Theta^*})$ $\forall \mathbf{x}\in \mathcal{X}$, $\varphi(\mathbf{z};\boldsymbol \Psi) = \varphi(\mathbf{z};\boldsymbol \Psi^*)$ $\forall \mathbf{z} \in \mathcal{Z}$, then we would have $Pr(Y=y|\mathbf{x},\mathbf{z};\boldsymbol \vartheta) = Pr(Y=y|\mathbf{x},\mathbf{z}; \boldsymbol \vartheta^*)$. From there, we only need to focus on showing that if $Pr(Y=y|\mathbf{x},\mathbf{z};\boldsymbol \vartheta) = Pr(Y=y|\mathbf{x},\mathbf{z}; \boldsymbol \vartheta^*)$, then $\pi(\mathbf{x};\boldsymbol{\Theta}) = \pi(\mathbf{x};\boldsymbol{\Theta^*})$ $\forall \mathbf{x}\in \mathcal{X}$, $\varphi(\mathbf{z};\boldsymbol \Psi) = \varphi(\mathbf{z};\boldsymbol \Psi^*)$ $\forall \mathbf{z} \in \mathcal{Z}$.

%\textcolor{red}{[@Jiaxiang, this last statement is unclear.]}

\begin{lemma}
\label{lm:id}
Under the assumption that there exists a continuous variable $z_\ell\in\mathcal Z\setminus\mathcal X$
such that $\partial_{z_\ell}\log\varphi(\mathbf{z};\boldsymbol \psi)$ exists,
the Deep ZIB model is identifiable if $Pr(Y=y|\mathbf{x},\mathbf{z};\boldsymbol \vartheta) = Pr(Y=y|\mathbf{x},\mathbf{z};\boldsymbol \vartheta^*)$ implies $\pi(\mathbf{x};\boldsymbol{\Theta}) = \pi(\mathbf{x};\boldsymbol{\Theta^*})$ $\forall \mathbf{x}\in \mathcal{X}$, $\varphi(\mathbf{z};\boldsymbol \Psi) = \varphi(\mathbf{z};\boldsymbol \Psi^*)$ $\forall \mathbf{z} \in \mathcal{Z}$
\end{lemma}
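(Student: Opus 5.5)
The plan is to reduce everything to the single observable quantity $p(\mathbf{x},\mathbf{z}) := \Pr(Y=1\mid\mathbf{x},\mathbf{z};\boldsymbol\vartheta)=\pi(\mathbf{x};\boldsymbol\Theta)\,\varphi(\mathbf{z};\boldsymbol\Psi)$. Since $Y$ is binary, $\Pr(Y=0\mid\cdot)=1-p$, so equality of the two conditional distributions is equivalent to
\[
\pi(\mathbf{x};\boldsymbol\Theta)\,\varphi(\mathbf{z};\boldsymbol\Psi)=\pi(\mathbf{x};\boldsymbol\Theta^*)\,\varphi(\mathbf{z};\boldsymbol\Psi^*)\quad\text{for all }(\mathbf{x},\mathbf{z})\in\mathcal X\times\mathcal Z .
\]
The converse direction, that equal $\pi$ and $\varphi$ give equal distributions, is immediate from the ZIB mixture form. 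So the content of Lemma \ref{lm:id} is to show that this product identity forces $\pi=\pi^*$ and $\varphi=\varphi^*$.

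\textbf{Step 1: separate the variables.} Because all four functions are sigmoids, they are strictly positive. We can therefore take logarithms and write
\[
\log\pi(\mathbf{x};\boldsymbol\Theta)-\log\pi(\mathbf{x};\boldsymbol\Theta^*)=\log\varphi(\mathbf{z};\boldsymbol\Psi^*)-\log\varphi(\mathbf{z};\boldsymbol\Psi).
\]
The covariate $z_\ell$ lies in $\mathcal Z\setminus\mathcal X$, so the left side does not depend on it. Differentiating in $z_\ell$, which the existence assumption on $\partial_{z_\ell}\log\varphi$ permits, gives $\partial_{z_\ell}\log\varphi=\partial_{z_\ell}\log\varphi^*$. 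More directly, on the product domain the left side is a function of $\mathbf{x}$ alone and the right side a function of $\mathbf{z}$ alone. Hence both sides equal a constant $\log c$. This yields $\pi(\cdot;\boldsymbol\Theta)=c\,\pi(\cdot;\boldsymbol\Theta^*)$ and $\varphi(\cdot;\boldsymbol\Psi)=c^{-1}\varphi(\cdot;\boldsymbol\Psi^*)$.

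\textbf{Step 2: fix the scale $c=1$.} This is the main obstacle. The product $\pi\varphi$ is invariant under $(\pi,\varphi)\mapsto(c\pi,\varphi/c)$ whenever both stay in $(0,1]$, and the derivative condition alone cannot rule this out. My first attempt is to exploit the continuous, detection-only covariate $z_\ell$ through the sigmoid output layer. If $h_{\boldsymbol\psi}$ is unbounded above along $z_\ell$, so that $\sup_{\mathbf{z}}\varphi(\mathbf{z};\boldsymbol\Psi)=1$ for both parameter sets, then taking suprema in $\varphi=c^{-1}\varphi^*$ gives $1=c^{-1}$, hence $c=1$.

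Failing that, I would make explicit a normalization that the lemma implicitly relies on. Options are a reference point $\mathbf{z}_0$ with $\varphi(\mathbf{z}_0)$ known, or the boundary condition $\varphi\to 1$ as $z_\ell\to\infty$. With $c=1$ in hand, Step 1 gives both required equalities, and the conclusion of Definition \ref{def:ident_zib} follows.
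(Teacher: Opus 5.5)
This is essentially the paper's own argument. You reduce to $\pi\varphi=\pi^*\varphi^*$, separate variables after taking logarithms to get $\pi=C\pi^*$ and $\varphi^*=C\varphi$, and then force $C=1$ by letting a sigmoid output approach $1$. The paper does this with $\pi(\mathbf{x}_n;\boldsymbol\Theta)\to1$ and $\varphi(\mathbf{z}_n;\boldsymbol\Psi)\to1$ for one parameter set; your condition $\sup\varphi=1$ for both parameter sets works equally well.

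The paper justifies the saturation step only by saying the networks are ``unconstrained apart from the sigmoid output.'' That does not hold for a fixed parameter value: constant networks $f_{\boldsymbol\theta}\equiv a$, $h_{\boldsymbol\psi}\equiv b$ satisfy the lemma's hypothesis, yet the constant product $\pi\varphi$ can be split non-uniquely. So you are right to state saturation as an explicit extra assumption. You are also right that constancy on all of $\mathcal Z$ comes from separation of variables, not from the single derivative in $z_\ell$ that the paper uses.
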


\begin{proof}
Recall that under the DeepZIB model,
\[
\Pr(Y=1 \mid \mathbf{x}, \mathbf{z}; \boldsymbol{\vartheta})
= \pi(\mathbf{x}; \boldsymbol{\Theta})\,\varphi(\mathbf{z}; \boldsymbol{\Psi}),
\]
and $\Pr(Y=0 \mid \mathbf{x}, \mathbf{z}; \boldsymbol{\vartheta}) = 1 -
\pi(\mathbf{x}; \boldsymbol{\Theta})\,\varphi(\mathbf{z}; \boldsymbol{\Psi})$.
Therefore, equality of the conditional distributions for $Y$ is equivalent
to equality of $\Pr(Y=1 \mid \mathbf{x}, \mathbf{z})$, and it suffices to
consider the case $y=1$.

Suppose that for all $(\mathbf{x},\mathbf{z}) \in \mathcal{X}\times\mathcal{Z}$,
\[
\pi(\mathbf{x};\boldsymbol{\Theta})\,\varphi(\mathbf{z};\boldsymbol{\Psi})
=
\pi(\mathbf{x};\boldsymbol{\Theta}^*)\,\varphi(\mathbf{z};\boldsymbol{\Psi}^*).
\]
Taking logarithms of both sides yields
\begin{equation}
\label{eq:logsep}
\log \pi(\mathbf{x};\boldsymbol{\Theta}) - \log \pi(\mathbf{x};\boldsymbol{\Theta}^*)
=
\log \varphi(\mathbf{z};\boldsymbol{\Psi}^*) - \log \varphi(\mathbf{z};\boldsymbol{\Psi}).
\end{equation}
The left-hand side of \eqref{eq:logsep} depends only on $\mathbf{x}$, whereas the
right-hand side depends only on $\mathbf{z}$. By assumption, there exists a
continuous variable $z_\ell \in \mathcal{Z}\setminus\mathcal{X}$ such that
$\partial_{z_\ell}\log \varphi(\mathbf{z};\boldsymbol{\Psi})$ exists.
Differentiating both sides of \eqref{eq:logsep} with respect to $z_\ell$ therefore
yields
\[
\frac{\partial}{\partial z_\ell}
\Bigl(
\log \varphi(\mathbf{z};\boldsymbol{\Psi}^*)
-
\log \varphi(\mathbf{z};\boldsymbol{\Psi})
\Bigr)
= 0
\quad \text{for all } \mathbf{z} \in \mathcal{Z}.
\]
It follows that $\log \varphi(\mathbf{z};\boldsymbol{\Psi}^*) -
\log \varphi(\mathbf{z};\boldsymbol{\Psi})$ is constant on $\mathcal{Z}$, and hence
there exists a constant $C>0$ such that
\[
\varphi(\mathbf{z};\boldsymbol{\Psi}^*) = C\,\varphi(\mathbf{z};\boldsymbol{\Psi})
\quad \text{for all } \mathbf{z}\in\mathcal{Z}.
\]
Substituting this relation back into the equality of probabilities implies
\[
\pi(\mathbf{x};\boldsymbol{\Theta}) = C\,\pi(\mathbf{x};\boldsymbol{\Theta}^*)
\quad \text{for all } \mathbf{x}\in\mathcal{X}.
\]

To determine the value of $C$, note that $\pi(\cdot)$ and $\varphi(\cdot)$ are
probability-valued functions taking values in $(0,1)$. Since the neural network
parameterizations are unconstrained apart from the sigmoid output, there exist
sequences $\{\mathbf{x}_n\}\subset\mathcal{X}$ and $\{\mathbf{z}_n\}\subset\mathcal{Z}$
such that $\pi(\mathbf{x}_n;\boldsymbol{\Theta}) \to 1$ and
$\varphi(\mathbf{z}_n;\boldsymbol{\Psi}) \to 1$. The above relations then imply
$\pi(\mathbf{x}_n;\boldsymbol{\Theta}^*) \to 1/C$ and
$\varphi(\mathbf{z}_n;\boldsymbol{\Psi}^*) \to C$, which is only possible if
$C \le 1$ and $C \ge 1$. Hence $C=1$.

Therefore,
\[
\pi(\mathbf{x};\boldsymbol{\Theta}) = \pi(\mathbf{x};\boldsymbol{\Theta}^*)
\quad \forall\,\mathbf{x}\in\mathcal{X},
\qquad
\varphi(\mathbf{z};\boldsymbol{\Psi}) = \varphi(\mathbf{z};\boldsymbol{\Psi}^*)
\quad \forall\,\mathbf{z}\in\mathcal{Z},
\]
which establishes identifiability of the DeepZIB model.
\end{proof}

%\textcolor{red}{[@Jiaxiang, the below is unclear]}
%To facilitate stable optimization of the proposed zero-inflated model, we consider an alternative objective function $\mathcal{L}$ that is equivalent to the negative log-likelihood in terms of optimization. The following lemma establishes this equivalence by showing that both objectives induce identical gradients.
\subsubsection{Loss function properties}
We train the model using the proposed loss function as expressed in (\ref{eq:loss}). A key property of this loss is that it preserves the gradient structure of the negative log-likelihood with respect to the model outputs $(\pi_i,\varphi_i)$. As a result, gradient-based optimization under the proposed loss follows the same update directions as direct likelihood-based optimization. Lemma \ref{lem:grad-eq} formalizes this property.

%\textcolor{red}{[@Jiaxiang, the below lemma is unclear + you used $\eta$ already for the GRF]}
\begin{lemma}\label{lem:grad-eq}
For each observation $i$, let $\mathcal{J}_i(\pi_i,\varphi_i)$ denote the negative log-likelihood contribution and let $\mathcal{L}_i^{(deep)}(\pi_i,\varphi_i)$ denote the proposed loss defined in (\ref{eq:loss}). Then, for all $(\pi_i,\varphi_i)\in(0,1)^2$,
$
\nabla_{(\pi_i,\varphi_i)} \, \mathcal{L}_i
=
\nabla_{(\pi_i,\varphi_i)} \, \mathcal{J}_i .
$
\end{lemma}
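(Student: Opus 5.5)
The plan is to prove the identity by direct differentiation, one observation at a time. The point that needs to be made precise first, and which I expect to be the main conceptual obstacle, is how $w_i$ enters. Following the description after (\ref{eq:loss}), $w_i$ is a soft label: it is computed at the current outputs $(\pi_i,\varphi_i)$ and then held fixed (no gradient flows through it) when $\mathcal{L}_i^{(deep)}$ is differentiated. This is exactly the E-step/M-step convention of EM. With this convention the lemma becomes an instance of Fisher's identity: the gradient of the expected complete-data log-likelihood equals the gradient of the observed-data log-likelihood. I will state the convention explicitly and then verify the identity by hand rather than citing the general result.

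The posterior weight follows from Bayes' rule and (\ref{eq:B1})--(\ref{eq:B2}):
\[
w_i=\Pr(S_i=1\mid y_i=0)=\frac{\pi_i(1-\varphi_i)}{1-\pi_i\varphi_i}.
\]
The observed-data negative log-likelihood contribution, using $1-\pi_i+\pi_i(1-\varphi_i)=1-\pi_i\varphi_i$, is
\[
\mathcal{J}_i=-y_i\log(\pi_i\varphi_i)-(1-y_i)\log(1-\pi_i\varphi_i).
\]
I then split into the two cases $y_i=1$ and $y_i=0$.

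\emph{Case $y_i=1$.} Here $\mathcal{L}_i^{(deep)}=-\log\pi_i-\log\varphi_i=\mathcal{J}_i$. The two functions coincide, so their gradients agree trivially.

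\emph{Case $y_i=0$.} The two losses are
\[
\mathcal{J}_i=-\log(1-\pi_i\varphi_i),\qquad
\mathcal{L}_i^{(deep)}=-w_i\log\pi_i-(1-w_i)\log(1-\pi_i)-w_i\log(1-\varphi_i).
\]
Differentiating $\mathcal{J}_i$ gives $\partial_{\pi_i}\mathcal{J}_i=\varphi_i/(1-\pi_i\varphi_i)$ and $\partial_{\varphi_i}\mathcal{J}_i=\pi_i/(1-\pi_i\varphi_i)$. Differentiating $\mathcal{L}_i^{(deep)}$ with $w_i$ frozen gives
\[
\partial_{\pi_i}\mathcal{L}_i^{(deep)}=-\frac{w_i}{\pi_i}+\frac{1-w_i}{1-\pi_i},\qquad
\partial_{\varphi_i}\mathcal{L}_i^{(deep)}=\frac{w_i}{1-\varphi_i}.
\]
Now substitute the posterior weight, using the two simplifications
\[
\frac{w_i}{\pi_i}=\frac{1-\varphi_i}{1-\pi_i\varphi_i},\qquad 1-w_i=\frac{1-\pi_i}{1-\pi_i\varphi_i}.
\]
The $\pi$-derivative becomes $\bigl[1-(1-\varphi_i)\bigr]/(1-\pi_i\varphi_i)=\varphi_i/(1-\pi_i\varphi_i)$. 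The $\varphi$-derivative becomes $\pi_i/(1-\pi_i\varphi_i)$. Both match the corresponding derivatives of $\mathcal{J}_i$.

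All quantities are well defined on $(0,1)^2$ because $1-\pi_i\varphi_i>0$ there, which gives the claim for every $(\pi_i,\varphi_i)$ in that set. As a closing remark, the chain rule then transfers the equality to gradients with respect to the network parameters $\boldsymbol\theta$, $\boldsymbol\psi$ and, for DeepZIB-ST, the GRF parameters $\boldsymbol\eta$.
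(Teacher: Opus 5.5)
Your proposal is correct and follows the paper's proof step for step. You split on $y_i$, note that $\mathcal{L}_i^{(deep)}$ and $\mathcal{J}_i$ coincide when $y_i=1$, and for $y_i=0$ differentiate with $w_i$ held fixed, then substitute $w_i=\pi_i(1-\varphi_i)/(1-\pi_i\varphi_i)$ to recover $\varphi_i/(1-\pi_i\varphi_i)$ and $\pi_i/(1-\pi_i\varphi_i)$; your explicit simplifications of $w_i/\pi_i$ and $1-w_i$ (and the Fisher-identity framing) supply the algebra that the paper only asserts.
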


\iffalse
\textcolor{red}{[@Jiaxiang, the below proof is suitable for a proposal, but not for a technical paper. It should be expanded into an actual proof. If it will take significant space then it could be moved to the appendix.]}
\begin{proof}
We compute the partial derivatives with respect to $\pi$ and $\varphi$. For $y_i = 1$, $\mathcal{\eta}_i = \mathcal{L}_i = -log(\pi_i \varphi_i)$,  so $\nabla_{(\pi_i,\varphi_i)} \, \mathcal{L}_i
  \;=\;
  \nabla_{(\pi_i,\varphi_i)} \, \mathcal{\eta}_i$. 
For $y_i = 0$, $\mathcal{\eta}_i = -log((1-\pi_i)+\pi_i(1-\varphi_i))$, $\mathcal{L}_i = -w_i log(\pi_i)-w_i log(1-\varphi_i) - (1-w_i)log(1-\pi_i)$. Then we have $\partial\mathcal{\eta}_i/\partial\pi_i = \partial\mathcal{L}_i/\partial\pi_i = \frac{\varphi_i}{1-\pi_i+\pi_i(1-\varphi_i)}$, $\partial\mathcal{\eta}_i/\partial\varphi_i = \partial\mathcal{L}_i/\partial\varphi_i = \frac{\pi_i}{1-\pi_i+\pi_i(1-\varphi_i)}$. Therefore, $\nabla_{(\pi_i,\varphi_i)} \, \mathcal{L}_i
  \;=\;
  \nabla_{(\pi_i,\varphi_i)} \, \mathcal{\eta}_i$

\end{proof}
\fi
\begin{proof}
We consider the two possible outcomes $y_i\in\{0,1\}$ separately. For $y_i = 1$, the negative log-likelihood contribution takes the form
\[
\mathcal{J}_i(\pi_i,\varphi_i)
= -\log(\pi_i \varphi_i),
\]
which coincides with the proposed loss $\mathcal{L}_i(\pi_i,\varphi_i)$ defined in \eqref{eq:loss}. Therefore,
\[
\nabla_{(\pi_i,\varphi_i)} \mathcal{L}_i
=
\nabla_{(\pi_i,\varphi_i)} \mathcal{J}_i
\quad \text{for all } (\pi_i,\varphi_i)\in(0,1)^2 .
\]

For $y_i = 0$, the negative log-likelihood contribution is given by
\[
\mathcal{J}_i(\pi_i,\varphi_i)
= -\log\!\bigl[(1-\pi_i)+\pi_i(1-\varphi_i)\bigr].
\]

Direct differentiation yields
\[
\frac{\partial \mathcal{J}_i}{\partial \pi_i}
= \frac{\varphi_i}{(1-\pi_i)+\pi_i(1-\varphi_i)},
\qquad
\frac{\partial \mathcal{J}_i}{\partial \varphi_i}
= \frac{\pi_i}{(1-\pi_i)+\pi_i(1-\varphi_i)}.
\]

The proposed loss for $y_i=0$ can be written as
\[
\mathcal{L}_i
= - w_i \log \pi_i
  - w_i \log(1-\varphi_i)
  - (1-w_i)\log(1-\pi_i),
\]
where $w_i= \frac{\pi_i(1-\varphi_i)}{(1-\pi_i)+\pi_i(1-\varphi_i)}$ is evaluated in the forward pass and then treated as fixed when taking gradients.
Substituting this expression for $w_i$ and differentiating with respect to $\pi_i$ and $\varphi_i$ shows that
\[
\frac{\partial \mathcal{L}_i}{\partial \pi_i}
= \frac{\varphi_i}{(1-\pi_i)+\pi_i(1-\varphi_i)},
\qquad
\frac{\partial \mathcal{L}_i}{\partial \varphi_i}
= \frac{\pi_i}{(1-\pi_i)+\pi_i(1-\varphi_i)}.
\]

Thus,
\[
\nabla_{(\pi_i,\varphi_i)} \mathcal{L}_i
=
\nabla_{(\pi_i,\varphi_i)} \mathcal{J}_i
\quad \text{for all } (\pi_i,\varphi_i)\in(0,1)^2 .
\]
\end{proof}

%\textcolor{red}{[@Jiaxiang, need to first introduce what lemma 3 is or what do we try to show.]}
\subsubsection{Connection to ZIB}
Lemma \ref{lemma3} shows that Deep ZIB reduces to the classical ZIB model under linear parameterization, thereby establishing DeepZIB as a generalization of the classical ZIB formulation.
\begin{lemma}
\label{lemma3}
Consider the Deep ZIB model with
$\pi(\mathbf{x};\boldsymbol{\Theta})
  = \frac{1}{1+\exp(-f_{\boldsymbol{\theta}}(\mathbf{x}))}, 
\varphi(\mathbf{z};\boldsymbol{\Psi})
  = \frac{1}{1+\exp(-h_{\boldsymbol{\psi}}(\mathbf{z}))}$, 
where $f_{\boldsymbol{\theta}}$ and $h_{\boldsymbol{\psi}}$ are feed-forward neural networks. Suppose that each network consists of a single affine layer with no hidden layers and no intermediate nonlinear activations,
\iffalse
that is,
$f_{\boldsymbol{\theta}}(\mathbf{x})
  = \mathbf{x}^\top\boldsymbol{\alpha},
h_{\boldsymbol{\psi}}(\mathbf{z})
  = \mathbf{z}^\top\boldsymbol{\beta}
$, for some coefficient vectors $\boldsymbol{\alpha}$ and $\boldsymbol{\beta}$.
Then, for any $(\mathbf{x},\mathbf{z})$,
\fi
then the DeepZIB model reduces to the classical ZIB model.
\iffalse
with logit–linear predictors
$\pi(\mathbf{x};\boldsymbol{\alpha})
  = \frac{1}{1+\exp(-\mathbf{x}^\top\boldsymbol{\alpha})},
\varphi(\mathbf{z};\boldsymbol{\beta})
  = \frac{1}{1+\exp(-\mathbf{z}^\top\boldsymbol{\beta})}.$
Conversely, any classical ZIB model with logit–linear predictors can be
represented as a Deep ZIB model whose networks have a single linear layer.
\fi
\end{lemma}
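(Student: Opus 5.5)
The plan is to argue directly from the definitions, matching the model layer by layer: the presence and detection functions, the observation distribution, and the estimation objective. A single affine layer with no hidden layers means $f_{\boldsymbol{\theta}}(\mathbf{x}) = \mathbf{W}\mathbf{x} + b$ with scalar output, so $\mathbf{W}$ is a row vector. First we absorb the bias into the coefficient vector by augmenting $\mathbf{x}$ with a constant entry $1$; this is the standard intercept convention implicit in (\ref{eq:logit1}). Then $f_{\boldsymbol{\theta}}(\mathbf{x}) = \mathbf{x}^\top\boldsymbol{\alpha}$ with $\boldsymbol{\alpha} := (b, \mathbf{W})^\top$. In the same way, $h_{\boldsymbol{\psi}}(\mathbf{z}) = \mathbf{z}^\top\boldsymbol{\beta}$. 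Substituting into (\ref{eq:Deep ZIB_prob1}) and (\ref{eq:Deep ZIB_prob2}) gives exactly $\pi_{zib}$ and $\varphi_{zib}$ from (\ref{eq:logit1}) and (\ref{eq:logit2}). The map $\boldsymbol{\theta}\mapsto\boldsymbol{\alpha}$ is a bijection between the network weights and the regression coefficients, and likewise for $\boldsymbol{\psi}\mapsto\boldsymbol{\beta}$. The two parameter spaces therefore coincide, which also gives the converse: every classical ZIB model is a single-layer DeepZIB model.

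Second, we check that the observation model agrees. DeepZIB uses the same hierarchical structure (\ref{eq:B1})--(\ref{eq:B2}), with only $\pi$ and $\varphi$ swapped. Given the equality of these functions, $\Pr(Y=y\mid\mathbf{x},\mathbf{z})$ is identical for every $y\in\{0,1\}$, so the two models induce the same distribution of the observables.

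Third, we address estimation, so that "reduces to" also covers the fitted model and not only the model class. Lemma \ref{lem:grad-eq} shows that the gradient of the DeepZIB loss (\ref{eq:loss}) with respect to $(\pi_i,\varphi_i)$ equals the gradient of the negative log-likelihood $\mathcal{J}_i$. By the chain rule, the gradients with respect to $(\boldsymbol{\alpha},\boldsymbol{\beta})$ then coincide with those of the classical ZIB negative log-likelihood $\mathcal{L}$. Consequently the stationary points, and hence the maximum likelihood estimates, are the same.

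The only delicate point is this estimation step. The weights $w_i$ are held fixed when differentiating, so the DeepZIB loss is not literally equal to $\mathcal{L}$ as a function. The argument must therefore be phrased at the level of gradient fields and stationary points, equivalently as an EM-type fixed point, rather than as equality of objective values.
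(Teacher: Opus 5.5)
Your proposal is correct, and your first step is exactly the paper's proof: the paper writes each single affine map as $f_{\boldsymbol{\theta}}(\mathbf{x})=\mathbf{x}^\top\boldsymbol{\alpha}$ and $h_{\boldsymbol{\psi}}(\mathbf{z})=\mathbf{z}^\top\boldsymbol{\beta}$ and substitutes these into the sigmoids. It reads ``reduces to'' purely as equality of model form, and it silently relies on the same intercept-absorption convention that you make explicit. Your remaining steps go beyond what the paper proves or needs, but they are sound: the converse via the parameter bijection, the equality of $\Pr(Y=y\mid\mathbf{x},\mathbf{z})$, and the estimation equivalence via Lemma~\ref{lem:grad-eq}, which you correctly phrase as agreement of gradient fields with $w_i$ recomputed at each point, i.e.\ an EM fixed-point statement.
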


\begin{proof}
If $f_{\boldsymbol{\theta}}$ and $h_{\boldsymbol{\psi}}$ each consist of a
single affine map, then there exist coefficient vectors $\boldsymbol \alpha$ and $\boldsymbol \beta$ such that 
$f_{\boldsymbol{\theta}}(\mathbf{x})=\mathbf{x}^\top\boldsymbol{\alpha}$ and
$h_{\boldsymbol{\psi}}(\mathbf{z})=\mathbf{z}^\top\boldsymbol{\beta}$. Substituting into
$\pi(\mathbf{x};\boldsymbol{\Theta})$ and $\varphi(\mathbf{z};\boldsymbol{\Psi})$
yields exactly the same form of the classical ZIB model.
\iffalse
Conversely, given any classical ZIB model with coefficients
$(\boldsymbol{\alpha},\boldsymbol{\beta})$, choosing
$f_{\boldsymbol{\theta}}(\mathbf{x})=\mathbf{x}^\top\boldsymbol{\alpha}$ and
$h_{\boldsymbol{\psi}}(\mathbf{z})=\mathbf{z}^\top\boldsymbol{\beta}$
defines one-layer networks that reproduce the same predictors. Thus the
classical ZIB model is precisely the subclass of Deep ZIB models whose networks
contain a single affine layer.
\fi
\end{proof}

%\textcolor{red}{@Jiaxiang,no need for the below paragraph. Please distribute the below text on the individual components above}
 %Lemma \ref{lemma3} shows that Deep ZIB reduces to the classical ZIB model under linear parameterization, thereby establishing Deep ZIB as a strict generalization of the classical formulation. Together, these results ensure that the presence and detection components are uniquely determined by the observable data. Moreover, the gradient-equivalence result confirms that the proposed objective function is consistent with likelihood-based optimization. Collectively, these properties establish Deep ZIB as a well-defined, identifiable extension of the classical ZIB model.

\subsection{Estimation and prediction using DeepZIB}
\label{estimation}
%The estimation of the DeepZIB model follows from the loss function introduced in Section \ref{Deep ZIB}. 
The estimation and prediction steps for DeepZIB are detailed in Algorithm 1. Given the current estimates for the parameter sets $(\boldsymbol{\Theta}, \boldsymbol{\Psi})$, the model first runs the forward pass with current network parameters to determine $\pi$ and $\varphi$, and then computes the posterior weight $w$. Fixing $w$, we can update the loss function $\mathcal L^{(deep)}$, and then back-propagate
to update the network parameters $(\boldsymbol{\Theta}, \boldsymbol{\Psi})$. For a new spatio-temporal observation $(\mathbf{x}^*, \mathbf{z}^*, \mathbf{s}^*, t^*)$, the trained model evaluates the presence probability 
$\hat\pi$
and the conditional detection probability 
$\hat\varphi$.

%The proposed Deep ZIB models were implemented using PyTorch and GPyTorch. 
For the embedded DNNs, we find that it is more logical for the presence network $f_{\boldsymbol{\theta}}$ to have a more complex architecture than the detection network $h_{\boldsymbol{\psi}}$. Specifically, the MLP used to model the presence probability consists of four hidden layers with widths of 128, 64, 32, and 16, respectively, whereas the detection MLP contains three hidden layers with widths of 64, 32, and 16, respectively. All hidden layers use ReLU activations and are followed by a final linear layer with a sigmoid activation. %To explicitly model the spatiotemporal dependencies in whale presence, we integrated a GRF into the network architecture. Direct inference with a full GP is computationally difficult for large datasets, therefore we 
For the GRF model, we employed a variational Gaussian Process approximation using a set of inducing points, enabling scalable model training. %The entire framework was trained by minimizing the proposed loss function. 
The optimization was performed using the Adam optimizer, whereas a grid search was used to tune the number of hidden layers, the number of epochs, and the learning rate. 
To generate spatial probability maps, monthly satellite covariate fields are first constructed by 
temporally averaging the original satellite products within each month. 
These monthly covariate maps have a regular spatial grid 
with a resolution of $0.01^\circ \times 0.01^\circ$ ($\approx 1.24$ km$^2$). For each grid point, the corresponding covariates and spatial coordinates 
are passed through the trained model to obtain $\hat\pi(\mathbf{s}, t)$, 
which are then visualized as monthly presence probability maps.
%\textcolor{red}{[@Jiaxiang, please comment on the output of your model, what is the spatial and temporal resolution, and whether you do any pre- or post-processing to output the spatial probability maps at monthly intervals.]}
\begin{algorithm}[tb]
\label{alg:Deep ZIB}
\caption{Estimation and prediction for the Deep ZIB model}
\raggedright

\KwIn{Training samples $\{(\mathbf{x}_i,\mathbf{z}_i,\mathbf{s}_i,t_i,y_i)\}_{i=1}^n$; 
Learning rate $\rho$; Number of epochs $T$; Batch size $B$; Prediction samples $\{(\tilde{\mathbf{x}}_j,\tilde{\mathbf{z}}_j,\tilde{\mathbf{s}}_j,\tilde{t}_j)\}_{j=1}^m$}

\textbf{Initialize} network parameters $(\boldsymbol{\theta},\boldsymbol{\eta},\boldsymbol{\psi})$; GRF parameters $\boldsymbol{\eta}$.

\For{$t = 1,\dots,T$}{%
  partition training indices into mini-batches $\mathcal{B}$ of size $B$\;
  \For{each mini-batch $\mathcal{B}$}{%
    \textbf{Forward step:} For each $i\in\mathcal{B}$, compute
    \[
    \pi_i \;=\; \frac{1}{1+\exp\!\bigl(-(f_{\boldsymbol{\theta}}(\mathbf{x}_i) 
              + g_{\boldsymbol{\eta}}(\mathbf{s}_i,t_i))\bigr)}\]
        \[
    \varphi_i \;=\; \frac{1}{1+\exp\!\bigl(-h_{\boldsymbol{\psi}}(\mathbf{z}_i)\bigr)}.
    \]
    \[
    w_i \;=\;
    \begin{cases}
      1, & y_i = 1,\\[4pt]
      \dfrac{\pi_i(1-\varphi_i)}{1-\pi_i + \pi_i(1-\varphi_i)}, & y_i = 0.
    \end{cases}
    \]
    
    \textbf{Mini-batch objective:}
    \[
    \mathcal{L}_{\mathcal{B}}^{(deep)} \;=\; \frac{1}{|\mathcal{B}|}\sum_{i\in\mathcal{B}}\bigl[(1-w_i)\log\pi_i + w_i \log(1-\pi_i)+w_i (y_i\log\varphi_i + (1-y_i)\log(1-\varphi_i))\bigr].
    \]

    \textbf{Parameter update:} All parameters are updated jointly using a gradient-based optimizer (Adam)
    \[(\boldsymbol{\theta},\boldsymbol{\eta},\boldsymbol{\psi})
    \leftarrow 
    \text{Adam}\bigl((\boldsymbol{\theta},\boldsymbol{\eta},\boldsymbol{\psi}),
                     \nabla\mathcal{L}_{\mathcal{B}}^{(deep)}, \rho\bigr).
    \]
  }
}
Using the fitted parameters $(\boldsymbol{\theta}^*,\boldsymbol{\eta}^*,\boldsymbol{\psi}^*)$, compute the presence probability for each $j$:
\[
\hat{\pi}_j = \frac{1}{1+ \exp(-\!\bigl(f_{\boldsymbol{\theta}^*}(\tilde{\mathbf{x}}_j)
                  + g_{\boldsymbol{\eta}^*}(\tilde{\mathbf{s}}_j,\tilde{t}_j)\bigr)
)}\]

\KwOut{Estimated parameters $(\boldsymbol{\theta}^*,\boldsymbol{\eta}^*,\boldsymbol{\psi}^*)$, Predicted presence probabilities $\{\hat{\pi}_j\}_{j=1}^m$}

\end{algorithm}

\section{Results}
\label{sec:results}
This section presents the experimental evaluation of the proposed models. First, we present the experimental setup, including the benchmarks and evaluation metrics. Section \ref{sec: sim} reports the results of a simulation case study. This is then followed by Section \ref{sec: case} where the real-world case study on modeling NARW presence is presented.

\subsection{Experimental Setup} \label{subsec: exp}
We evaluate the proposed Deep ZIB model against the following representative sets of benchmarks: %a wide array of benchmark models that represent the primary alternative modeling strategies for this type of data. These benchmarks span both machine learning and statistical approaches, and differ in how they treat the observation process and zero inflation. 
\begin{enumerate}
    \item 
%(\textit{i}) 
\textit{Machine learning models}, including Logistic Regression (LR), Extreme Gradient Boosting (XGBoost), and Multilayer Perceptrons (MLP). These approaches learn a direct mapping from covariates to observed detections, but assume %and are widely used in species distribution modeling. However, 
perfect detection, thereby lacking a mechanism to differentiate between true ecological absence and non detections. %As a result, all zero observations are treated equivalently, which can lead to biased inference when detection is imperfect.

%(\textit{ii}) 
\item \textit{Classical zero-inflated Bernoulli (ZIB) model,} which explicitly accounts for excess zeros by introducing a latent structure that separates structural zeros from non-structural zeros. This is the same model described in Section \ref{zib}. 
\end{enumerate}

%However, classical ZIB models typically rely on rigid parametric link functions, which limit their ability to capture complex nonlinear relationships in multi-source, multi-modal data settings.
%We evaluate the performance of our proposed models (Deep ZIB, Deep ZIB-ST) against a wide array of benchmark models. These benchmarks were chosen to represent the primary alternative modeling strategies for this type of data. (\textit{i}) machine learning models including Logistic Regression (LR), eXtreme Gradient Boosting (XGBoost), Multilayer Perceptrons (MLP): This group of models represents a common approach that ignores the observation process (i.e., assumes detection is perfect). (\textit{ii}) classical ZIB model: This model is the traditional method that accounts for the zero-inflated nature of the data. 

The presence probability is the primary target of interest, hence all evaluations are performed on the predicted presence probabilities $\hat{\pi}_i$ %For the simulation case studies (details of which are explained in Section \ref{sec: sim}), the ground truth presence labels are available
%Because the simulation study provides access to both the ground-truth presence labels and the true presence probabilities, whereas the case study contains only binary detection outcomes, we adopt different sets of evaluation metrics in the two settings.
%For real-world case study, models are trained using observations collected prior to February 2022 and evaluated on data collected later. 
%The model performance is assessed 
using the following metrics:
\begin{enumerate}
    \item 
%\item{(\textit{i}) 
Area under the Receiver Operating Characteristic (ROC) curve (AUC),
%which measures a model’s ability to rank present samples above absent ones, 
as defined in (\ref{eq: auc}), where $\mathcal{D}_1, \mathcal{D}_2$ denote the sets of presence and absence samples. Higher values of AUC indicate better performance in distinguishing the presence and absence states. 
\begin{equation}
    \label{eq: auc}
    \mathrm{AUC} = \frac{\sum_{i \in \mathcal{D}_1}\sum_{j \in \mathcal{D}_2}\mathbf{1}\!\left( \hat{\pi}_i > \hat{\pi}_j \right)}{|\mathcal{D}_1||\mathcal{D}_2|}.
\end{equation}

%\item{(\textit{ii})
\item 
\texorpdfstring{$F_1$} ~ score, which is the harmonic mean of precision and recall, as expressed in (\ref{eq:f1}). Higher values indicate better classification performance.
\begin{equation}
\label{eq:f1}
\mathrm{F_1}
=
\frac{
  2 \times \sum_{i=1}^n \mathbf{1}(s_i = 1, \hat{s}_i = 1)
}{
  \sum_{i=1}^n \mathbf{1}(s_i = 1) + \sum_{i=1}^n \mathbf{1}(\hat{s}_i = 1)
}.
\end{equation}

%\item{(\textit{iii}) 
\item Mean absolute error (MAE) %which computes the mean squared difference 
measuring the divergence between the predicted and true probabilities, as expressed in (\ref{eq:mae}). 
\begin{equation}
\label{eq:mae}
\mathrm{MAE}
=
\frac{1}{n}
\sum_{i=1}^{n}
\bigl|
  \hat{\pi}_i - \pi_i
\bigr|.
\end{equation}

%\item{(\textit{iv}) 
\item Brier score which measures the difference between $\hat{\pi}_i$ and the binary presence indicators $s_i$, as defined in (\ref{eq:bs}). Smaller values of the Brier score correspond to better performance.
\begin{equation}
\label{eq:bs}
\mathrm{Brier~score}
=
\frac{1}{n}
\sum_{i=1}^{n}
\bigl( \hat{\pi}_i - s_i \bigr)^2.
\end{equation}

%\item{(\textit{v}) 
\item Negative log-likelihood (NLL), which measures the adequacy of the model, as expressed in (\ref{eq:nll}). Lower values of NLL indicates a more adequate model. %better-calibrated probabilistic predictions.
\begin{equation}
\mathrm{NLL}
=
-
\sum_{i=1}^{n}
\Bigl[
  s_i \log(\hat{\pi}_i)
  +
  (1-s_i)\log\bigl(1-\hat{\pi}_i\bigr)
\Bigr].
\label{eq:nll}
\end{equation}
\end{enumerate}

For the simulation experiments (Section \ref{sec: sim}), ground truth labels and presence probabilities are available, so all evaluation metrics can be calculated. In the real-world case study (Section \ref{sec: case}), the true presence probabilities are unknown and only detection labels $y_i$ are observed, so we can solely evaluate the models using the Brier score and NLL. 

\subsection{Simulated case studies}
\label{sec: sim}
%We first conduct a simulated case study to assess whether the proposed estimation procedure can recover the true regression coefficients when the data are generated from a zero-inflated Bernoulli (ZIB) model. 
To mimic real-world settings, we simulate the detection and presence data along an actual glider path from our real-world dataset. First, we extract the covariate information for the glider- and satellite-based variables (listed in Table \ref{tab:covariates}) at the spatial locations corresponding to the glider trajectory for model training. For extrapolation purposes, we also extract the satellite-based covariate information at spatial and temporal windows of interest, beyond those sampled by the glider. Figure \ref{fig:covariates_map_sim} shows representative covariate maps of the frontal value, chlorophyll, and sea surface temperature in February 2022, with glider track and detections overlaid. 
\begin{figure}[htbp]
\centering
\includegraphics[trim=0cm 5cm 0cm 4cm,clip,width = 1\linewidth]{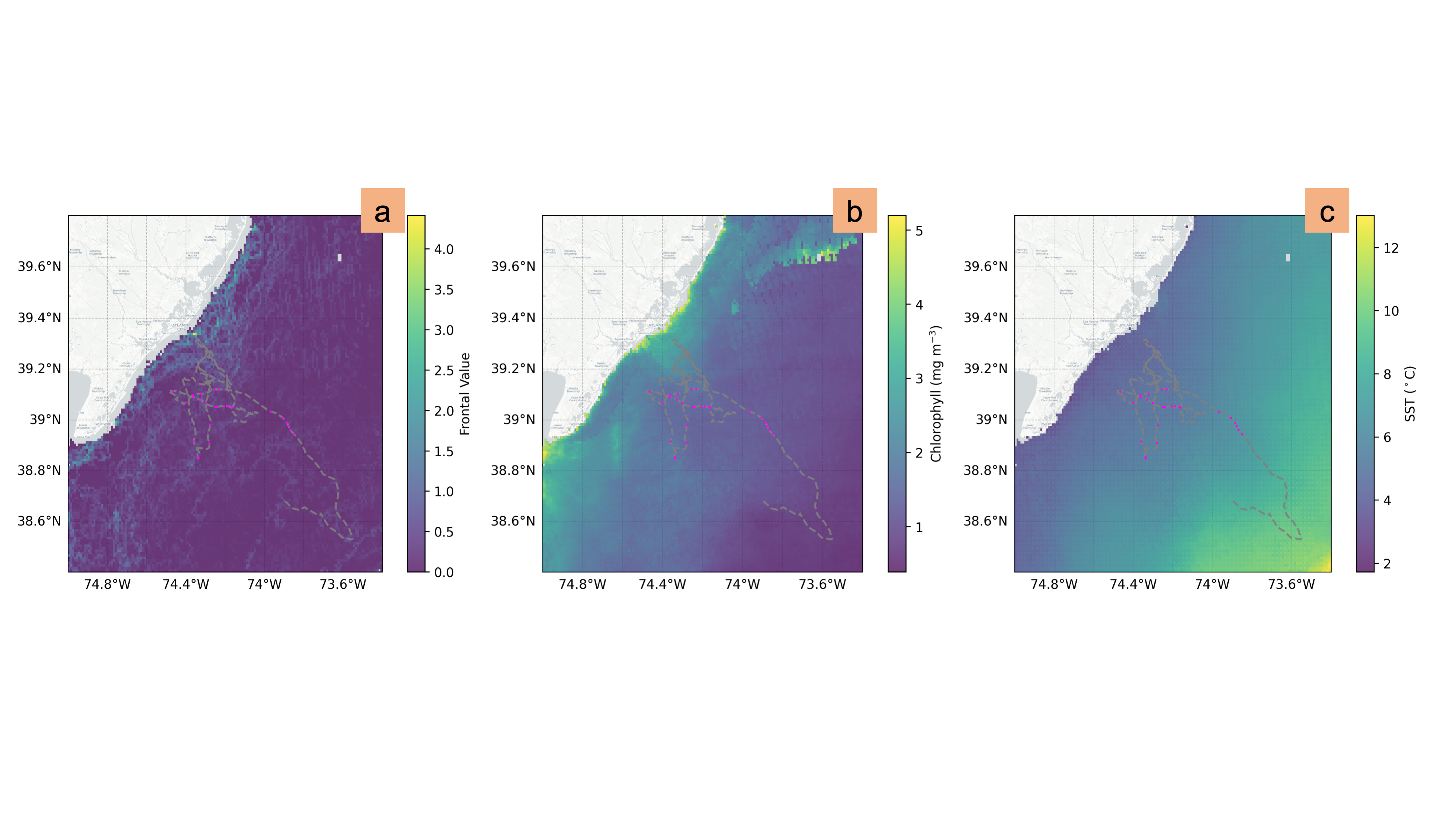}
\caption{Representative covariate maps associated with the simulation study with glider path and simulated detections overlaid. Panels (a)–(c) display the spatial distributions of the frontal value, chlorophyll, and sea surface temperature on February 2022, respectively.} %\textcolor{red}{[@Jiaxiang, should we show "simulated detections?]}}
\label{fig:covariates_map_sim}
\end{figure}

Next, we consider two simulated case studies for generating the true presence and detection data, referred to hereinafter as Case Study I and II. In Case Study I, we simulate presence and detection probabilities based on a simple linear functional form, as defined in (\ref{eq:linear1}) and (\ref{eq:linear2}), respectively, such that $x_{ij}$ is the $i$th observation of the $j$th covariate in $\mathbf{x}$. Similarly, $z_{ij}$ is the $i$th observation of the $j$th covariate in $\mathbf{z}$. 
%\textcolor{red}{[Jiaxiang, please check the following and add information if needed.]} 
Here, the covariate vector for the presence process is defined as: $$\mathbf{x} = (\text{Frontal Value, Sea Surface Temperature, Chlorophyll, Distance to shore, Seasonal term)},$$
where the distance to shore is calculated based on the Euclidean distance from the points to the coastline, whereas the seasonal term is the sine and cosine transformation of the month. The covariate vector for the detection process is defined as: $$\mathbf{z} = (\text{Depth, Temperature, Salinity, Oxygen concentration, Elevation, Distance to shore)}.$$ 
\begin{equation}
\label{eq:linear1}
\pi_i = \frac{1}{1 + \exp\big(-(-1.5 +3 x_{i1} - x_{i2}
+ x_{i3} +0.8 x_{i4} -0.4 x_{i5} + 0.8 x_{i6})\big)}, 
\end{equation}
\begin{equation}
\label{eq:linear2}
\phi_i = \frac{1}{1 + \exp\big(-(0.1 + 0.4 z_{i1} +0.5z_{i2}
- 0.9 z_{i3} - z_{i4} + 0.3 z_{i5}
+ 0.5 z_{i6})\big)}.
\end{equation}
\iffalse
\begin{align}
\label{eq:linear}
\pi_i
&= \frac{1}{1 + \exp\big(-(-0.5 + 0.01 x_{i1} - x_{i2}
- 0.8 x_{i3} + 0.7 x_{i4} - x_{i5} - 0.4 x_{i6})\big)}, \\ \nonumber
\phi_i
&= \frac{1}{1 + \exp\big(-(0.5 - 0.8 z_{i1} - z_{i2}
- 0.2 z_{i3} - z_{i4} - 0.3 z_{i5}
+ 0.2 z_{i6} - 0.1 z_{i7})\big)}.
\end{align}
\fi 
In Case Study II, presence and detection probabilities are simulated using a more complex, non-linear functional form, as expressed in (\ref{eq:nonlinear1}) and (\ref{eq:nonlinear2}), respectively. 
% \begin{equation}
% \small 
% \label{eq:nonlinear1}
% \begin{aligned}
% \pi_i
% &= \frac{1}{1 + \exp\Big(-\big[-1.2 + 0.6 x_{i1} - 1.5 x_{i2} +1.7 x_{i3}
% + 1.5 x_{i4} -  x_{i5} +0.4 x_{i6} \\
% &\quad
% + 1.5 x_{i1} x_{i2} + 0.2 x_{i4} x_{i5} + 1.6 x_{i1} x_{i4}
% + \sin(2 x_{i1}) + 0.5 \cos(3 x_{i2}) + 0.4 \cos(x_{i4})
% \big]\Big)}
% \end{aligned}
% \end{equation}
\begin{equation}
\small
\label{eq:nonlinear1}
\pi_i =
\frac{1}{
\begin{aligned}
&1 + \exp\Big(-\big[-1.2 + 0.6 x_{i1} - 1.5 x_{i2} + 1.7 x_{i3}
+ 1.5 x_{i4} - x_{i5} + 0.4 x_{i6} \\
&
+ 1.5 x_{i1}x_{i2}
+ 0.2 x_{i4}x_{i5}
+ 1.6 x_{i1}x_{i4}
+ \sin(2x_{i1})
+ 0.5\cos(3x_{i2})
+ 0.4\cos(x_{i4})
\big]\Big).
\end{aligned}
}
\end{equation}

% \begin{equation}
% \small 
% \label{eq:nonlinear2}
% \begin{aligned}
% \phi_i
% &= \frac{1}{1 + \exp\Bigl(-\bigl[0.6 + 0.5 z_{i1} +0.2 z_{i2} +0.6 z_{i3} +0.2 z_{i4}
% - 0.3 z_{i5} + 0.5 z_{i6} \\
% &\quad
% + 0.4 z_{i1} z_{i4} + 0.5 z_{i3} z_{i6} - z_{i2} z_{i3}
% + 0.6 \sin(1.5 z_{i3}) - 0.5 \sqrt{\left|z_{i6}\right|}
% + 2 \sin(z_{i1})
% \big]\Big)}.
% \end{aligned}
% \end{equation}
\begin{equation}
\small
\label{eq:nonlinear2}
\phi_i =
\frac{1}{
\begin{aligned}
&1 + \exp\Big(-\big[0.6 + 0.5 z_{i1} + 0.2 z_{i2} + 0.6 z_{i3}
+ 0.2 z_{i4} - 0.3 z_{i5} + 0.5 z_{i6} \\
&\quad
+ 0.4 z_{i1}z_{i4}
+ 0.5 z_{i3}z_{i6}
- z_{i2}z_{i3}
+ 0.6\sin(1.5z_{i3})
- 0.5\sqrt{\left|z_{i6}\right|}
+ 2\sin(z_{i1})
\big]\Big)
\end{aligned}
}.
\end{equation}
\iffalse
\begin{equation}
\label{eq:nonlinear}
\begin{aligned}
\pi_i
&= \frac{1}{1 + \exp\Big(-\big[-0.5 + 0.1 x_{i1} - 0.48 x_{i2} - 0.8 x_{i3}
+ 0.8 x_{i4} - 1.5 x_{i5} - 0.75 x_{i6} \\
&\quad
+ 1.5 x_{i3} x_{i2} + 0.2 x_{i4} x_{i3} + 1.6 x_{i2} x_{i4}
+ \sin(2 x_{i3}) + 0.5 \cos(3 x_{i2}) + 0.4 \cos(x_{i4})
\big]\Big)}, \\
\phi_i
&= \frac{1}{1 + \exp\Big(-\big[0.5 - 0.4 z_{i1} - z_{i2} - 0.1 z_{i3} - z_{i4}
- 0.3 z_{i5} + 0.5 z_{i6} + 0.7 z_{i7} \\
&\quad
- 2 z_{i1} z_{i4} + 1.8 z_{i3} z_{i6} - z_{i2} z_{i3}
+ 0.6 \sin(1.5 z_{i3}) - 0.5 \sqrt{\left|z_{i6}\right|}
+ 2 \sin(z_{i1})
\big]\Big)} .
\end{aligned}
\end{equation}
\fi 
%Regression coefficients for both components are pre-defined. 
Given covariate information, the presence probabilities are first generated (based on the functional forms in (\ref{eq:linear1}) for Case Study I and (\ref{eq:nonlinear1}) for Case Study II). A latent presence state is then generated according to the Bernoulli distribution in (\ref{eq:B1}). Conditional on the presence state, the detection probability is computed using the detection-related covariates (based on the functional forms in (\ref{eq:linear2}) for Case Study I and (\ref{eq:nonlinear2}) for Case Study II). Detection labels are then generated according to the Bernoulli distribution in (\ref{eq:B2}). If the species is absent, no detection is recorded. Note that during model fitting, only the detection labels and respective covariate information are ``seen'' by the model, mimicking the information available in a real-world setting.

We conduct a total of $100$ simulations, where for each simulation, we adopt a five-fold cross validation. %\textcolor{red}{[@Jiaxiang, I accidentally removed this. Please keep it back if we implement 5-fold CV in the below simulations.]} 
Simulated observations are randomly partitioned into five folds, with four folds used for training and the remaining fold used for testing. We then use detection labels and covariate information to train all models listed in Section \ref{subsec: exp}. Note that we differentiate between DeepZIB and Deep ZIB-ST, based on whether or not a latent GRF is added in the presence function. To compute the AUC and $F_1$ score, we convert predicted probabilities into class labels (absent or present) through a classification threshold $\tau$, such that     $\hat s_i =\mathbf{1}\!\left\{ \hat \pi_i \ge \tau \right\}$. %as in (\ref{eq:threshold}). 
To ensure fairness, the threshold is determined for each method using Youden's J index \citep{youden1950index}, which identifies the point on the ROC curve that maximizes the separation between true positive rate and false positive rate. 
%\begin{equation}
%\label{eq:threshold}
%    \hat s_i =\mathbf{1}\!\left\{ \hat \pi_i \ge \tau \right\}.
%\end{equation}

%Table \ref{tb:eval-linear} and \ref{tb:eval-nonlinear} summarize the results under the linear and nonlinear generative scenarios. The threshold is then applied to convert predicted presence probabilities into binary predictions for evaluating classification metrics.

Table \ref{tb:eval-linear} reports the results under Case Study I (linear detection and presence functions). Here, it is evident that models that explicitly account for imperfect detection outperform models that assume perfect detection (mainly the ML models). Although the ML models (logistic regression, XGBoost, and MLP) achieve moderate AUC values, they exhibit substantially higher Brier scores, NLL, and MAE, reflecting systematic mismatch induced by ignoring latent presence. Within the class of zero-inflated models, the classical ZIB model attains the best overall performance for Case Study I (linear presence and detection functions), which is expected given that the data-generating approach matches its parametric assumptions. Importantly, the proposed DeepZIB variants achieve comparable performance with slight degradation relative to the Classical ZIB model, indicating that the additional flexibility does not lead to overfitting and perform reasonably well even if underlying relationships are linear.
\begin{table}[t]
\caption{Case Study I (Linear scenario): performance comparison for the presence process across multiple benchmark models. Bold-faced values denote best performance. Numbers in parentheses are standard deviations around the average reported values. }%\textcolor{red}{[@Jiaxiang, please list brier and NLL in the last two columns. Put the threshold as the first column before AUC.]}}
\label{tb:eval-linear}
\centering
\setlength{\tabcolsep}{3pt}
\scriptsize
\begin{adjustbox}{max width=\linewidth}
\begin{tabular}{lc|ccccc}
\toprule
Model & Threshold $\tau$ & AUC ($\uparrow$) &  $F_1$ Score ($\uparrow$) & MAE  ($\downarrow$) & Brier ($\downarrow$) & NLL ($\downarrow$)  \\
\midrule
XGBoost   & 0.100 (0.008)&0.848 (0.009)& 0.635 (0.015)& 0.124 (0.003)& 0.140 (0.004)& 0.457 (0.012)\\
LR        & 0.117 (0.013)& 0.824 (0.008)&  0.572 (0.015)& 0.135 (0.003)& 0.145 (0.004)& 0.469 (0.012)\\
MLP     & 0.467 (0.029)& 0.864 (0.008)& 0.654 (0.017)& 0.133 (0.009)& 0.125 (0.005)& 0.400 (0.012)\\
ZIB     & 0.252 (0.044)& \textbf{0.885 (0.006)}& \textbf{0.687 (0.015)}& \textbf{0.023 (0.007)}& \textbf{0.096 (0.003)} & \textbf{0.319 (0.008)}\\
Deep ZIB     & 0.235 (0.032)& 0.881 (0.007)& 0.680 (0.014)& 0.035 (0.005)& 0.098 (0.003)& 0.326 (0.009)\\
Deep ZIB-ST & 0.248 (0.033)& 0.881 (0.007)&  0.682 (0.014)& 0.036 (0.006)& 0.098 (0.003)& 0.326 (0.009)\\
\bottomrule
\end{tabular}
\end{adjustbox}
\end{table}

Table \ref{tb:eval-nonlinear} summarizes the results under Case Study II (non-linear detection and presence functions). In this case, the classical ZIB model suffers substantial degradation across all metrics, reflecting its inability to capture the non-linear covariate effects by its rather rigid parametric assumptions. XGBoost, which is a powerful ML approach, performs better than ZIB, even when trained directly on the detection labels. By contrast, the proposed DeepZIB and Deep ZIB-ST models consistently achieve the best performance across all evaluation metrics. This is attributed to their ability to simultaneously model the zero-inflated structure of the data, as well as their flexibility to learn nonlinear covariate effects. Overall, these results confirm that properly accounting for zero inflation is essential, and that functional flexibility becomes critical when the true ecological relationships are complex.
\begin{table}[tb]
\caption{Case Study II (Non-linear scenario): performance comparison for the presence process across multiple benchmark models. Bold-faced values denote best performance. Numbers in parentheses are standard deviations around the average reported values. }%\textcolor{red}{[@Jiaxiang, please list brier and NLL in the last two columns. Put the threshold as the first column before AUC.]}}
\label{tb:eval-nonlinear}
\centering
\setlength{\tabcolsep}{3pt}
\scriptsize
\begin{adjustbox}{max width=\linewidth}
\begin{tabular}{lc|ccccc}
\toprule
Model & Threshold $\tau$ & AUC ($\uparrow$) & $F_1$ Score ($\uparrow$) & MAE ($\downarrow$) & Brier ($\downarrow$) & NLL ($\downarrow$) \\
\midrule
XGBoost   & 0.119 (0.015)  & 0.764 (0.011)  & 0.537 (0.015) & 0.136 (0.004) & 0.161 (0.004) &0.553 (0.018)\\ 
LR        & 0.114 (0.010) & 0.704 (0.009)   &0.472 (0.011) &0.167 (0.002) & 0.177 (0.004) &0.573 (0.013)\\
MLP     & 0.557 (0.040) & 0.798 (0.011) & 0.572 (0.017) & 0.162 (0.008) & 0.160 (0.005) & 0.485 (0.013)\\
ZIB     & 0.380 (0.065)  & 0.775 (0.015) & 0.551 (0.024) & 0.164 (0.024) &0.194 (0.016) &1.005 (0.301)\\
Deep ZIB     & 0.256 (0.032)  & \textbf{0.849 (0.007)} & \textbf{0.629 (0.013)} & \textbf{0.050 (0.005)} & \textbf{0.118 (0.003)} & \textbf{0.382 (0.009)}\\
Deep ZIB-ST & 0.293 (0.031)  & \textbf{0.849 (0.007)}  & \textbf{0.629 (0.013)} & 0.055 (0.006) & 0.119 (0.004)&0.386 (0.010)\\
\bottomrule
\end{tabular}
\end{adjustbox}
\end{table}
%In the linear scenario, the classical ZIB model performs slightly better than our proposed model, which is expected because the data-generation mechanism matches the model’s parametric assumptions. The proposed method, however, performs comparably and does not incur noticeable loss. In contrast, under the nonlinear scenario, the classical ZIB model suffers substantial degradation across nearly all metrics, reflecting its inability to capture the nonlinear covariate effects embedded in the generative mechanism. By comparison, the Deep ZIB model attains the best performance across all the evaluation metrics. These gains highlight the flexibility of the deep neural network and its capacity to approximate nonlinear response surfaces while retaining the zero-inflated structure. 

\begin{figure}[tb]
\centering
\includegraphics[trim=2cm 2.5cm 3cm 5cm,clip,width = 1\linewidth]{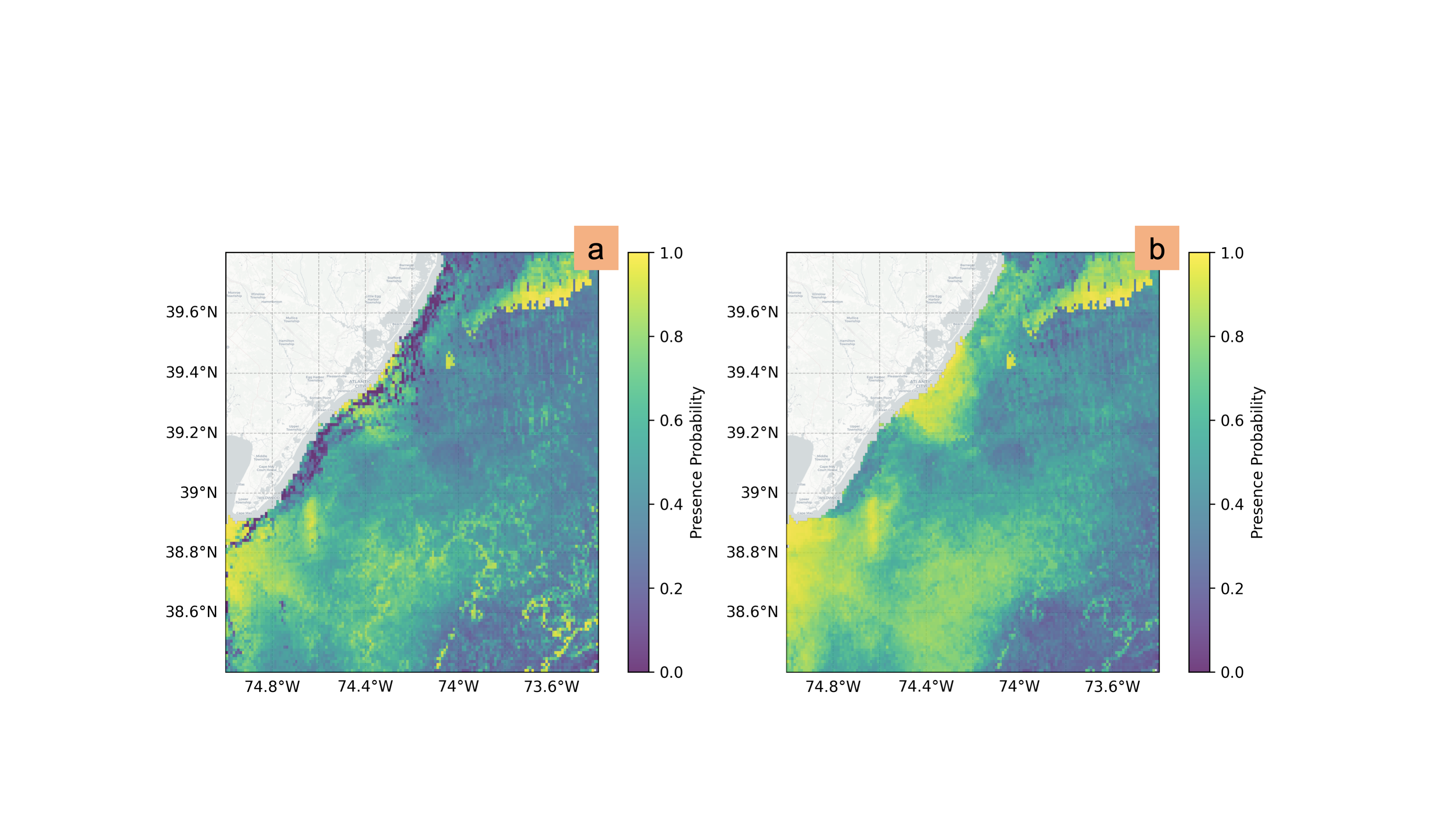}
\caption{Presence probability map comparison for the simulation study. Left side is the ground truth probability map and right side is the predicted probability map for February 2022 using DeepZIB-ST.}
\label{fig:prob_map_sim}
\end{figure}

To further assess spatial prediction capability under the nonlinear simulation setting, Figure \ref{fig:prob_map_sim} compares the predicted presence probability map against the simulated ground truth. %The spatial covariate fields used to generate the ground‐truth presence probabilities are shown in Figure \ref{fig:covariates_map_sim}. 
Evaluating spatial accuracy is essential because a primary goal of our framework is to provide reliable extrapolations at new or unobserved locations, not only at sites where NARWs are observed, thereby ensuring relevance to informing blue economy management. Here, the model is trained on simulated observations through January 2022 and evaluated on simulated observations from February 2022. The resulting probability map in Figure \ref{fig:prob_map_sim}(b) closely reproduces both the high-probability hot spots and the broader spatial gradients present in the ground truth. This demonstrates that the proposed method not only captures individual-level detection patterns but also generalizes well across space and time. 

\subsection{Modeling North Atlantic Right Whale Presence in the U.S. Mid-Atlantic}
\label{sec: case}
%In this study, we integrate the vertically resolved glider- and spatially resolved satellite-based information for predicting the whale presence probability. 
%Building on the data source described in Section \ref{sec:data_description}, we conduct a real-world case study to evaluate the proposed model for predicting whale presence probability. In this application, satellite-derived environmental fields are used to characterize the habitat conditions associated with the latent presence process, while glider-based measurements inform the detection process along the deployment tracks.
%The data spans from Aug 2020 to June 2022, off the south coast of New Jersey between $38.5^\circ N$ to $39.5^\circ N$ and $74.5^\circ W$ to $73.5^\circ W$. Generation 3 Teledyne Slocum gliders with digital acoustic monitoring (DMON) instruments were used to detect the whale. 
%Specifically, habitat-related covariates include frontal value, sea surface temperature, chlorophyll-a concentration, distance to shore, and seasonal components represented by sine and cosine transformations of the month. Detection-related covariates are derived from the glider observations and include temperature, salinity, oxygen concentration, depth, elevation, and distance to shore. 
We use the data described in Section \ref{sec:data_description} to train our proposed DeepZIB and DeepZIB-ST models. %\textcolor{red}{[@Jiaxiang, please check the following and add information if needed.]} 
Similar to the simulated experiments, the covariate vector for the presence process is defined as: $$\mathbf{x} = (\text{Frontal Value, Sea Surface Temperature, Chlorophyll, Distance to shore, Seasonal term}),$$ whereas the covariate vector for the detection process is defined as: $$\mathbf{z} = (\text{Depth, Temperature, Salinity, Oxygen concentration, Elevation, Distance to shore}).$$ The model is trained using all historical observations collected prior to February 2022 and evaluated on data from February 2022 through September 2022.

Since the latent presence state cannot be observed directly, evaluating the model's performance relies on indirect evidence. %we therefore adopted multiple strategies to validate our model's performance. %First, because an acoustic detection (detection = 1) provides strong evidence of whale presence, we treat detection-positive observations as reliable presence data. 
We exploit the fact that acoustic detections have minimal false positives \citep{johnson2022acoustic}, that is, a positive detection is equivalent to whale presence at a given location and time, whereas non-detections may arise from either true absence or imperfect detection. %a confirmed detection implies that at least one whale was present within the detection range of the glider. 
Under this assumption, whale detections can be treated as reliable, though incomplete, indicators of presence. We therefore evaluate the model’s predictive performance using probabilistic metrics such as NLL and Brier score, which assess the quality of the predicted probability values. 
%Then, we compare our probability maps with independently derived density maps from other modeling frameworks, such as the density surface model. Although the scales and modeling approaches differ, we assess whether both models reveal similar spatial patterns. Such alignment would suggest that our model captures meaningful ecological signals.
Table \ref{tab:eval_transposed} presents the Brier Score and Negative Log-Likelihood (NLL) for all benchmark models, evaluated exclusively on points with confirmed acoustic detections (i.e., detection = 1), which serve as a proxy for whale presence. Among the models, DeepZIB-ST achieves the best performance with the lowest NLL and Brier Score, followed by DeepZIB, and then ZIB as a distant third. In contrast, ML classifiers, which treat detection as direct evidence of presence without modeling latent states, yield significantly higher NLL and Brier Scores. These results align with the insights derived from the simulated case studies in Section \ref{sec: sim}. 
\begin{table}[tb]
\centering
\caption{Model evaluation on detected NARWs on the test set. Bold-faced values denote best performance. }%\textcolor{red}{[@Jiaxiang, are these results on the test set? If yes, please mention so.]}}
\label{tab:eval_transposed}
\begin{tabular}{|c|c|c|c|c|c|c|}
\hline
Metric & XGBoost & LR & 
MLP &ZIB & Deep ZIB & Deep ZIB-ST \\ \hline
BS ($\downarrow$)  & 0.9594       & 0.9728 & 0.9479 & 0.8431 &     0.8415   & \textbf{0.6560} \\ \hline
NLL ($\downarrow$)  &     3.8951    & 4.5313 & 3.7549 & 3.9789 &     2.9129   & \textbf{1.8998} \\ \hline
\end{tabular}
\end{table}

Furthermore, we conduct an external validation by comparing the spatial patterns of our predicted presence probability maps with independently derived NARW density estimates derived from the DSM developed by the Marine Geospatial Ecology Lab at Duke University \citep{roberts2016habitat,MGEL_OBIS_2022}. We note that this comparison is not intended to quantify predictive accuracy of either model, as the two modeling frameworks rely on different data sources, assumptions, and responses. Specifically, the latter predicts the density of NARWs (individuals/unit area) and is trained on visual line-transect survey data. While it does not represent the golden ground truth, it is a long-standing and well-established model in the literature, and has been evaluated independently on PAM data, showing positive correlations \citep{roberts2024north}. %Instead, this comparison assess whether two independent approaches, informed by distinct observation processes, identify similar patterns. Together, these validation strategies provide evidence for the performance of the proposed models, despite the absence of direct presence observations.
The top row in Figure \ref{fig:narw_map} presents the spatial probability maps of NARW presence produced by DeepZIB-ST for February, March, April, and May 2022. A clear temporal pattern is observed: higher predicted probabilities are concentrated in February and March, followed by a gradual decline in the following months. The corresponding DSM-based maps are shown in the bottom row of Figure \ref{fig:narw_map}.  To produce such maps, we used the OBIS-SEAMAP mapping tool and only trimmed the region that is correspondent to our study area  \citep{halpin2009obis,MGEL_OBIS_2022}. Comparing the distribution and density maps in the top and bottom rows respectively, we find that the overall spatial and temporal presence trends predicted by DeepZIB-ST align, to some extent, with those in the DSM-based density maps, for example in terms of the location and timing of hot spots and seasonal variations. In particular, as we approach summer, both the predicted probability (DeepZIB-ST, top row) and the estimated density (DSM, bottom row) are consistently decreasing, which may reflect the seasonal migration behavior of NARWs away from the study region \citep{whitt2013north}. For both sets of maps, there appears to be an increase in whale distribution (and density) as the distance from shore increases. The two approaches differ primarily in the spatial and temporal resolution at which NARWs are modeled. The DSM model is designed to model population-level density of NARWs over fairly large spatial and temporal scales, while our approach primarily leverages where ML is typically most useful\textemdash to model fine-scale covariate effects at high spatial and temporal resolutions. We discuss in Section \ref{sec: decisions} the relevance of such fine-grained predictions to localized decision support in the blue economy. %Our approach produces predictions at $1$-km resolution, and in principle, appears to carry more localized information content. %Specifically, the DSM output is provided at a coarse monthly temporal scale and spatially aggregated into large grid cells with a $5$ km resolution, whereas our model is fitted at approximately $1$ km resolution.
\begin{figure}[]
\centering
\includegraphics[trim=0cm 5.5cm 0cm 0cm,clip,width = 1\linewidth]{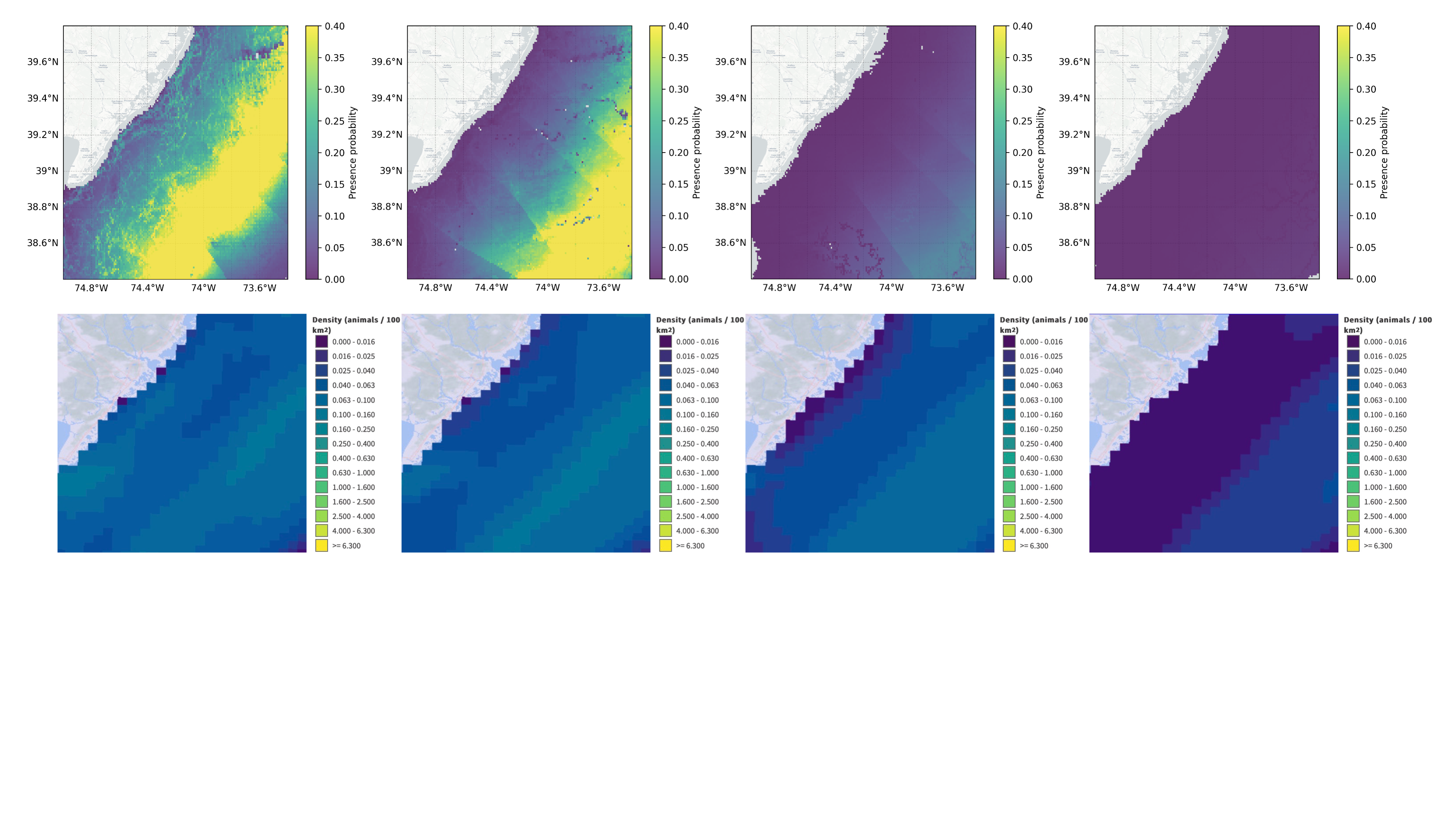}
\caption{Visual comparison between the spatial probability maps of NARW presence produced by DeepZIB-ST (top row) and the corresponding NARW density maps derived from the density surface model developed by the Marine Geospatial Ecology Lab \citep{roberts2016habitat, MGEL_OBIS_2022}, for the months of February, March, April, and May 2022.}
%OBIS-SEAMAP tool. 

%\citep{MGEL_OBIS_2022}. The top row (left to right) presents the Deep ZIB–estimated NARW presence probability maps for February–May 2022, while the bottom row (left to right) shows screenshots of }
\label{fig:narw_map}
\end{figure}

The importance of explicitly modeling imperfect detection is evident when comparing the outputs of the zero-inflated models versus those without a mechanism to account for zero inflation. Figure \ref{fig:showcase_three} shows the spatial probability maps of NARW presence generated using logistic regression (panel a), ZIB (panel b), and DeepZIB-ST (panel c) for February 2022. All models have been trained on the same datasets. Here, logistic regression, which treats non-detections as absences, produces unrealistically flat probability maps. This occurs because the model is forced to explain all non-detections as ecological absence. In contrast, the ZIB formulation identifies regions of elevated presence probability. %even when detection data are sparse. 
This confirms the importance of modeling imperfect detection. %suggests that imperfect detection must be modeled explicitly. 
%Beyond the benefit of the zero-inflated framework, 
The contribution of the deep component becomes apparent when comparing the spatial predictions from the ZIB model (panel b) with those from the DeepZIB-ST model (panel c). ZIB produces a fairly simplistic spatial pattern, with uniform spatial gradients, which is unlikely to reflect realistic NARW distributions. %most nearshore cells are assigned probabilities close to zero, and offshore cells are assigned high probabilities. 
In contrast, DeepZIB-ST shows significantly more localized spatial variations, possibly implying that the deep component is using the underlying environmental covariates to express complex, finer-scale habitat structure, which the linear ZIB formulation cannot capture. Again, this aligns well with the insights derived from our simulation experiments in Section \ref{sec: sim} where the classical ZIB models performed reasonably well in linear scenarios, but suffered substantial performance degradation in the non-linear case. The latter appears to be closer to the ecological reality of NARWs and their complex habitats.  
\begin{figure}[tb]
\centering
\includegraphics[trim=0cm 6cm 0cm 4cm,clip,width = 1\linewidth]{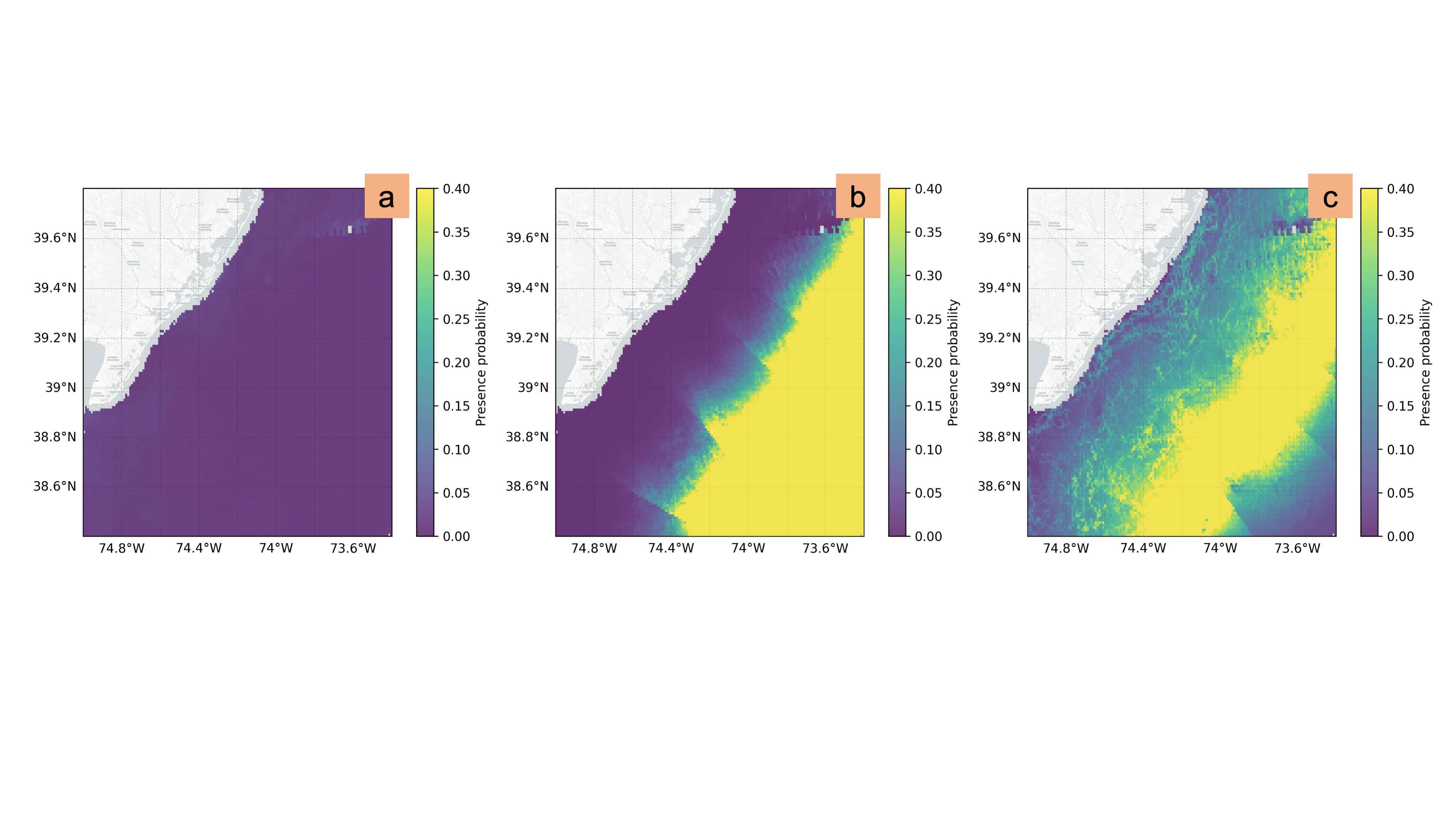}
\caption{Presence probability map produced by Logistic Regression (panel a), ZIB (panel b) and Deep ZIB (panel c) on February 2022.}
\label{fig:showcase_three}
\end{figure}

%The results from our simulation experiments and case study further underscore the advantages of this hybrid architecture. While classical ZIB models perform adequately in linear scenarios, they suffer substantial degradation in nonlinear settings that better reflect ecological reality. In contrast, the Deep ZIB framework consistently outperformed both traditional statistical baselines and standard machine learning models. The spatial simulation experiment further shows that the model produces spatially coherent presence probability maps, even when detections are only available along a narrow glider path. This is essential for operational use, since it is impossible to have gliders deployed anywhere across the region. 

To better assess which covariates are most influential to drive the probability of presence, we conducted a leave-one-feature-out analysis and evaluated the resulting change in NLL on the ground-truth presence set. As shown in Table \ref{tab:feature_importance_delta}, removing any covariate leads to a degradation in model fit, indicating that all variables provide important habitat information, yet at different magnitudes. Removing distance to shore or seasonal terms led to the largest change in model fit, indicating that spatial patterns and seasonality (driven by month of the year) play a dominant role in predicting NARW presence. Oceanographic variables also contribute meaningfully, with the order of importance being: frontal value, SST, then chlorophyll.  
These findings are consistent with the habitat-related patterns of NARWs. In particular, the dominant influence of distance to shore and seasonal terms suggests that large-scale spatial and migratory patterns define the primary envelope of NARW presence in this geographical region. Within this envelope, localized oceanographic information play an important role in shaping finer-scale habitat suitability. Among them, the frontal value emerges as a particularly informative predictor, aligning with its role as a proxy for oceanographic processes linked to prey aggregation \citep{baumgartner2003north,dreyfust2022aligning,ji2024machine}. Meanwhile, sea surface temperature and chlorophyll, which are commonly included as environmental covariates in species distribution models, exhibit comparatively lower, but still considerable, contributions\textemdash conclusions that align with prior NARW studies \citep{pendleton2012weekly, garrison2012application, roberts2024north}

%suggesting that these variables fine-tune presence patterns instead of determining them outright.
%\textcolor{red}{[@Jiaxiang, Maybe a little more text here on the importance of these findings, similar to what we had in the Scientific reports paper?]}%while their effects are more localized, refining the shape of the probability surface.
\begin{table}[tb]
\centering
\caption{Leave-one-feature-out contribution to the presence component. }
\label{tab:feature_importance_delta}
\begin{tabular}{cccc}
\toprule
\textbf{Model} & 
\textbf{NLL} & $\Delta$\textbf{NLL} & $\Delta \pi$ \\
\midrule
Frontal value + SST + chlorophyll + Seasonal term + Distance to shore              & 1.8998 & - & - \\
Frontal value + SST + chlorophyll + Distance to shore  & 3.2081 & +1.3083  & -0.1439 \\
Frontal value + SST + chlorophyll + Seasonal term  & 3.0000 & +1.1002  & -0.1395\\
% $\mathcal{X}\setminus x_5$          & 2.8336 & +0.9338 \\
% $\mathcal{X} \setminus x_6$         & 2.7909 & +0.8911 \\
 SST + chlorophyll + Seasonal term + Distance to shore           & 2.6475 & +0.7477  & -0.0860\\
Frontal value  + chlorophyll + Seasonal term + Distance to shore           & 2.4496 & +0.5498 & -0.0668 \\
Frontal value + SST + Seasonal term + Distance to shore        & 2.3023 & +0.4025 & -0.0826 \\

\bottomrule
\end{tabular}
\end{table}

\subsection{Connection to Blue Economy Management} \label{sec: decisions}
The proposed framework offers distinct operational advantages for the conservation of the critically endangered NARW in settings where mitigation must be deployed in space and time rather than uniformly across an entire region. Dynamic ocean management was developed precisely for such contexts: rather than imposing static boundaries on mobile species and changing ocean conditions, it seeks to update management in response to changing biological and oceanographic information \citep{maxwell2015dynamic}. For rare species such as NARWs, the core management problem is not whether to mitigate, but when and where to mitigate, so that protective measures can be targeted without imposing unnecessary economic disruptions.

This need for spatially targeted information is already noticeable in current NARW mitigation efforts. For example, vessel speed rules are in place to enforce vessels to %65 feet or longer 
travel at 10 knots or less in designated seasonal management areas along the U.S. East Coast, and those areas are defined as explicit geographic polygons or coastal bands \citep{NOAA_VesselStrikes_2025}. NOAA also %implements a 500-yard no-approach safety zone around individual right whales and 
frequently issues slow zones and dynamic management areas following visual or acoustic detections. In fisheries management, marine mammal protection includes area-based measures aimed at reducing entanglement risks \citep{NOAA_ALWTRP_2025}. %Taken together, these measures show that current NARW mitigation is an inherently spatio-temporal. %inherently spatially explicit, operating across multiple spatial scales from localized safe zones around individual whales to broader managed areas. 
Similar operational needs arise in offshore energy sectors such as offshore wind farm construction, operations, and maintenance, where proactive decisions about vessel routing and scheduling will depend on spatially resolved information about marine mammal presence \citep{silber2023offshore, papadopoulos2021seizing, petersen2026data}. %More broadly, the ramping blue economy activities will result in a congested marine transportation system that can significantly elevate the risk of vessel strikes and habitat disruption in proximity to marine infrastructure, as well as the transit to and from the port. 

DeepZIB is designed to produce outputs that can be directly paired with these mitigation policies. %By integrating satellite-based covariates with vertically resolved glider-based acoustic and environmental observations, the model
As seen in Figure \ref{fig:narw_map}, the model generates probability maps that retain local heterogeneity at scales relevant to spatial and temporal management of the blue economy. In this way, DeepZIB provides a principled pathway from high-frequency monitoring data to risk-aware information that can support high-resolution management. %Taken together, the Deep ZIB framework supports dynamic and localized decision-making that is aligned with PAM observation processes and the operational constraints of Blue Economy sectors, providing a principled pathway from high-frequency monitoring data to actionable, risk-aware mitigation.
While DeepZIB model operates on a continuous spatial domain, the probability maps in this study are rendered on a grid of $0.01^\circ$ (approximately 1 km). This resolution is primarily determined by the spatial support of the environmental covariates used for prediction, but it is also scientifically and operationally meaningful. Ecologically relevant ocean structure can vary at submesoscale ranges, including horizontal scales on the order of about 1 km \citep{10.1098/rspa.2016.0117}. %Such fine-scale variability is important because it can organize marine ecosystems through localized gradients in environmental and prey conditions \citep{levy2018}. 
For NARWs, available evidence suggests that habitat use may involve multiple spatial scales: broader environmental cues may help identify suitable foraging habitat at scales of roughly 10--100 km, whereas prey-search behavior may occur at much finer scales of approximately 0.1 to 1 km \citep{sorochan2021}. Dynamic ocean management frameworks similarly emphasize management that adapts in space and time, refining the scale of managed areas in order to better balance conservation and economic objectives \citep{maxwell2015dynamic,doi:10.1073/pnas.1513626113}. In this context, the kilometer-scale spatial resolution of the DeepZIB framework is operationally meaningful, as it is well aligned with the broader transition toward spatially targeted, dynamic management strategies. %spatial structure of existing mitigation, which already ranges from localized whale safety zones to broader managed areas. This allows risk to be differentiated across nearby locations in a way that can support more targeted mitigation and reduce
\section{Conclusion}
\label{sec:discussion}
In this paper, we proposed a Deep Zero-Inflated Bernoulli (Deep ZIB) framework that integrates the statistical structure of zero-inflated models with the representational capacity of deep learning. %By explicitly distinguishing the presence process from the detection process, the model preserves the interpretability of statistical approaches, while leveraging deep neural networks to learn nonlinear relationships between whale presence and heterogeneous, multi-modal environmental covariates. 
Results from both simulation studies and a real-world case study demonstrate that this hybrid formulation yields systematic improvements over classical statistical SDMs and purely machine-learning-based benchmarks. A key advantage of DeepZIB is its ability to generate high-resolution, spatially and temporally varying presence maps, providing valuable insights for targeted and risk-aware management of blue economy industries, ranging from offshore and marine energy, to fisheries management and maritime transport.

Despite these promising results, a number of limitations point to important directions for future research. First, while our model relies on satellite-derived oceanographic variables to infer whale presence, the spatial distribution of NARWs will be partly driven by prey availability. It is true that physical proxies such as SST, chlorophyll, and frontal value provide indirect information about habitat suitability, but may decouple from prey fields under certain conditions. Incorporating prey distribution therefore represents a natural step forward. Second, although the detection process is explicitly modeled as a separate component, it remains an approximation of a complex acoustic observation mechanism. Factors such as ambient noise and acoustic propagation conditions can substantially affect detection probability and are not fully captured by the current set of glider-based covariates. Extending the detection function in our model to better account for acoustic variability would further refine the separation between true ecological absence and observation failure.

%oceanographic proxies (e.g., SST, chlorophyll, frontal value) to infer whale presence. However, the distribution of NARWs is fundamentally driven by the availability of their prey. These physical proxies may decouple from prey availability under certain conditions. One promising approach would be to explicitly incorporate prey distribution as covariates to infer the whale presence. Second, while we modeled detection probability as a function of glider-based covariates, we did not explicitly account for acoustic factors. High noise levels may heavily affect the detection process. Incorporating noise levels into the detection branch would further refine the separation between true absence and non-detection. 

Third, while spatially resolved satellite products are used as model inputs, the current formulation uses an MLP architecture, therefore processing covariate information in a pointwise manner and therefore does not fully exploit the spatial context embedded in the satellite imagery. This is instead accounted for by the spatio-temporal GRF model that has been added in DeepZIB-ST. However, further work could explore more advanced DNN approaches (e.g. convolutional, graph, or attention-based models) to directly extract spatial features from the satellite imagery, thereby enabling the model to better capture spatial gradients that are difficult to represent using pointwise covariates. This naturally would complicate the parameter learning process and may require careful methodological and algorithmic solutions. 

%although our model utilizes spatially resolved satellite products as inputs, the current MLP architecture processes these covariates in a pointwise manner, ignoring the spatial context embedded in the satellite imagery. To address this, future work could employ CNNs or Transformers to extract spatial features from satellite products, potentially improving predictive accuracy in dynamic oceanographic regions. Finally, while the proposed method captures the spatial heterogeneity, the resulting spatial patterns should be interpreted as learned representations conditioned on the available data. Given limitations in data availability and observational coverage, it is not possible to directly verify the accuracy of the inferred spatial patterns, highlighting the need for future work to focus on uncertainty quantification in the spatial fields. 

From a decision-making context, an important avenue for future research lies in integrating the probabilistic outputs from our models operational decision-support systems for advancing responsible blue economy development and operation. For example, presence probability maps derived from our framework have the potential to inform risk-aware planning in offshore wind development and maintenance, vessel routing, fisheries management, and dynamic spatial zoning, supporting mitigation strategies that balance conservation objectives with human activities in the context of the growing blue economy.

\bibliographystyle{imsart-nameyear} % Style BST file
\bibliography{sample} 
%% update the figure
\end{document}